\documentclass[sigplan,screen]{acmart}

\AtBeginDocument{%
  \providecommand\BibTeX{{%
    \normalfont B\kern-0.5em{\scshape i\kern-0.25em b}\kern-0.8em\TeX}}}
    
\usepackage{graphicx}
\usepackage{subcaption}

\copyrightyear{2026}
\acmYear{2026}
\setcopyright{cc}
\setcctype{by}
\acmConference[ASPLOS '26]{Proceedings of the 31st ACM International Conference on Architectural Support for Programming Languages and Operating Systems, Volume 2}{March 22--26, 2026}{Pittsburgh, PA, USA}
\acmBooktitle{Proceedings of the 31st ACM International Conference on Architectural Support for Programming Languages and Operating Systems, Volume 2 (ASPLOS '26), March 22--26, 2026, Pittsburgh, PA, USA}
\acmPrice{}
\acmDOI{10.1145/3779212.3790250}
\acmISBN{979-8-4007-2359-9/2026/03}

\usepackage{multirow}
\usepackage[normalem]{ulem}
\usepackage{listings}
\usepackage{bm}
\usepackage{pifont}
\usepackage{hyperref}
\usepackage{float}
\usepackage{threeparttable}
\usepackage{xspace}
\usepackage{algorithmic}
\usepackage{algorithm}
\usepackage{xcolor}
\usepackage{makecell}
\usepackage{enumitem}
\useunder{\uline}{\ul}{}
\newcommand{\PHM}[1]{\vspace{.4em}\noindent\textbf{#1}\hspace{.5em}}
\newcommand{\SystemName}{\textsc{ZipServ}\xspace}

\newtheorem{theorem}{Theorem}[section]

\begin{document}

\title{\SystemName: Fast and Memory-Efficient LLM Inference\\with Hardware-Aware Lossless Compression}

\author{Ruibo Fan}
\orcid{0009-0006-7492-2069} 
\affiliation{%
  \institution{The Hong Kong University of Science and Technology (Guangzhou)}
  \city{Guangzhou}
  \country{China}
}
\email{ruibo.fan@connect.hkust-gz.edu.cn}

\author{Xiangrui Yu}
\orcid{0009-0005-2478-1512} 
\affiliation{%
  \institution{The Hong Kong University of Science and Technology (Guangzhou)}
  \city{Guangzhou}
  \country{China}
}
\email{xyu868@connect.hkust-gz.edu.cn}

\author{Xinglin Pan}
\orcid{0000-0002-1172-9935} 
\affiliation{%
  \institution{The Hong Kong University of Science and Technology (Guangzhou)}
  \city{Guangzhou}
  \country{China}
}
\email{xpan413@connect.hkust-gz.edu.cn}

\author{Zeyu Li}
\orcid{0009-0008-4381-0544} 
\affiliation{%
  \institution{The Hong Kong University of Science and Technology (Guangzhou)}
  \city{Guangzhou}
  \country{China}
}
\email{zli755@connect.hkust-gz.edu.cn}

\author{Weile Luo}
\orcid{0009-0007-2875-0056} 
\affiliation{%
  \institution{The Hong Kong University of Science and Technology (Guangzhou)}
  \city{Guangzhou}
  \country{China}
}
\email{wluo976@connect.hkust-gz.edu.cn}

\author{Qiang Wang}
\orcid{0000-0002-2986-967X} 
\affiliation{%
  \institution{Harbin Institute of Technology, Shenzhen}
  \city{Shenzhen}
  \country{China}
}
\email{qiang.wang@hit.edu.cn}

\author{Wei Wang}
\orcid{0000-0002-4585-4152} 
\affiliation{%
  \institution{The Hong Kong University of Science and Technology}
  \city{Hong Kong}
  \country{Hong Kong SAR}
}
\email{weiwa@cse.ust.hk}

\author{Xiaowen Chu}
\orcid{0000-0001-9745-4372} 
\affiliation{%
  \institution{The Hong Kong University of Science and Technology (Guangzhou)}
  \city{Guangzhou}
  \country{China}
}
\affiliation{%
  \institution{The Hong Kong University of Science and Technology}
  \city{Hong Kong}
  \country{Hong Kong SAR}
}
\email{xwchu@ust.hk}

\renewcommand{\shortauthors}{Ruibo Fan, et al.}
\renewcommand{\shorttitle}{ZipServ}

\sloppy
\begin{abstract}
Lossless model compression holds tremendous promise for alleviating the memory and bandwidth bottlenecks in 
bit-exact Large Language Model (LLM) serving. However, existing approaches often result in substantial inference slowdowns due to fundamental design mismatches with GPU architectures: at the kernel level, variable-length bitstreams produced by traditional entropy codecs break SIMT parallelism; at the system level, decoupled pipelines lead to redundant memory traffic.
We present \SystemName{}, a lossless compression framework co-designed for efficient LLM inference. \SystemName{} introduces \textit{Tensor-Core-Aware Triple Bitmap Encoding} (TCA-TBE), a novel fixed-length format that enables constant-time, parallel decoding, together with a \textit{fused decompression-GEMM} (ZipGEMM) kernel that decompresses weights on-the-fly directly into Tensor Core registers. This "\textit{load-compressed, compute-decompressed}" design eliminates intermediate buffers and maximizes compute intensity. Experiments show that \SystemName{} reduces the model size by up to 30\%, achieves up to 2.21$\times$ kernel-level speedup over NVIDIA's cuBLAS, and expedites end-to-end inference by an average of 1.22$\times$ over vLLM. \SystemName{} is the first lossless compression system that provides both storage savings and substantial acceleration for LLM inference on GPUs.
\end{abstract}

%
%
\begin{CCSXML}
<ccs2012>
   <concept>
       <concept_id>10010147.10010169.10010170.10010171</concept_id>
       <concept_desc>Computing methodologies~Shared memory algorithms</concept_desc>
       <concept_significance>500</concept_significance>
       </concept>
 </ccs2012>
\end{CCSXML}

\ccsdesc[500]{Computing methodologies~Shared memory algorithms}

%
\keywords{LLM Inference, Lossless Compression, GEMM, GPU, Tensor Core}

\maketitle
\section{Introduction}

\begin{figure}[tbp]
    \centering
\includegraphics[width=1.0\linewidth]{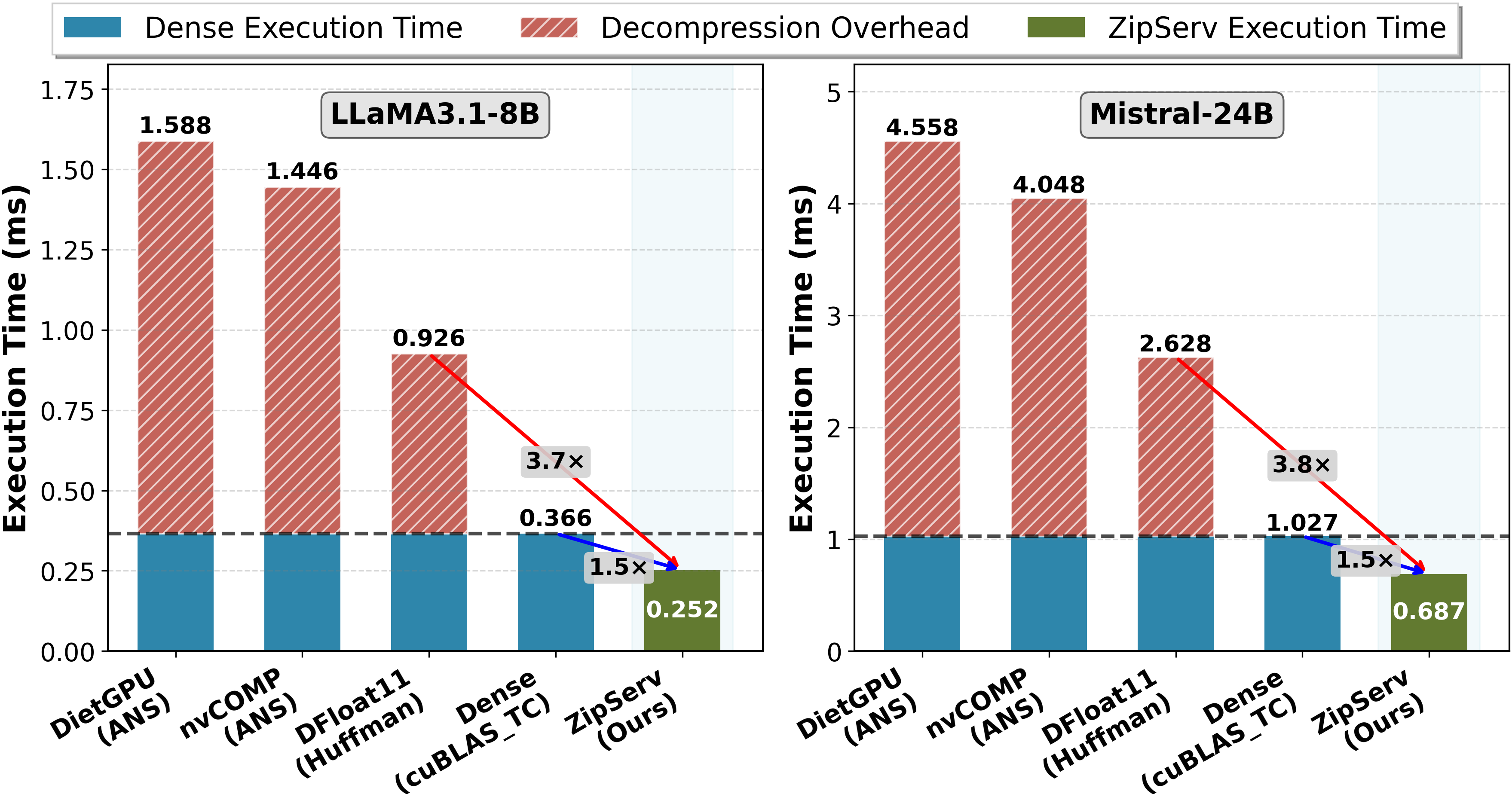}
    \caption{Execution time of lossless compression pipelines on NVIDIA L40S GPU with GateUp\_Proj layers.}
    \label{fig::duration}
\end{figure}

The transformative power of Large Language Models (LLMs) like GPT-4~\cite{openai2023gpt4}, LLaMA-3~\cite{dubey2024llama}, and Qwen-3~\cite{qwen3} is rooted in their massive scale~\cite{scaling_law, scaling_law_zhu}, enabling a new paradigm of AI applications~\cite{toolformer, chatbot_arena, Chain-of-Thought, allen_zhu_physics}. However, this immense scale creates significant deployment challenges, making GPU memory and memory bandwidth the primary bottlenecks for LLM serving, especially in resource-constrained environments.

Model compression offers a promising solution for efficient LLM deployment. Most existing approaches are \textit{lossy}, reducing size by approximating model weights via quantization (e.g., GPTQ~\cite{frantar2022gptq}, AWQ~\cite{lin2024awq}) or pruning (e.g., SparseGPT~\cite{Frantar2023SparseGPTML}). However, such approximations risk accuracy loss. 
For instance, aggressive 4-bit quantization (e.g., MXFP4) slashes accuracy from 56.0\% to 36.2\% on LiveCodeBench~\cite{quantization_study}, while even robust int8 quantization (GPTQ-int8) can cause up to 11.1\% loss in long-context reasoning (NOCHA)~\cite{mekala2025does}. 
These risks undermine reliability in safety-critical and user-facing settings, motivating approaches that guarantee bit-exact reproducibility and numerical integrity.

Lossless compression offers a compelling alternative by providing bit-exact model representation without accuracy loss. To date, its benefits have largely targeted storage and training workflows. For example, LMC~\cite{waddington2025lossless} and ZipNN~\cite{zipnn} employ Huffman~\cite{huffman2007method} to compress model checkpoints for efficient storage and distribution, while NeuZip~\cite{neuzip} and DietGPU~\cite{dietgpu} mitigate memory and communication overhead during training. Although recent efforts, notably DFloat11~\cite{dfloat11}, aim to extend these gains to inference, practical efficiency remains elusive. When integrated into serving pipelines, existing lossless techniques incur significant runtime overhead. As shown in Figure~\ref{fig::duration}, the decoupled decompression step alone takes 1.56–3.44$\times$ the time of the core inference computation. This overhead forces an \textit{unpleasant tradeoff} between memory efficiency and runtime efficiency. 

We contend that this tradeoff is not fundamental but arises from a mismatch between conventional compression algorithms and modern GPU architectures. The issue manifests at two levels. At the \textbf{kernel level}, traditional entropy codecs (e.g., Huffman~\cite{huffman2007method} or ANS~\cite{duda2015use}) produce variable-length bitstreams, whose decoding demands serialized, data-dependent operations. These are ill-suited to the lockstep, parallel SIMT execution model of GPU warps, resulting in severe control-flow divergence and compute underutilization. At the \textbf{system level}, most frameworks employ a \textit{decoupled} inference pipeline: weights are fully decompressed into a global-memory buffer before kernel consumption. This staged execution results in redundant, high-latency memory accesses, eroding compression-provided bandwidth savings and reducing arithmetic intensity during inference.

To rectify these fundamental algorithm-hardware mismatches, we present \textbf{\SystemName{}\footnote{Publicly available at \url{https://github.com/HPMLL/ZipServ_ASPLOS26.git}}}, the first lossless compression framework co-designed for high-performance LLM inference on GPUs. Our key observation is that the exponent bits of BFloat16 weights in LLMs exhibit a \textit{highly skewed, low-entropy distributions} in contemporary models. Exploiting this statistical redundancy, we propose \textit{Tensor-Core-Aware Triple Bitmap Encoding} (TCA-TBE), a fixed-length, bitmap-based weight format tailored to GPU architectures. Unlike variable-length entropy codecs, TCA-TBE enables constant-time, parallel decoding using lightweight bitwise operations, thereby eliminating control-flow divergence and aligning with the GPU's SIMT execution model. Paired with TCA-TBE, \SystemName{} devises a \textit{fused decompression-GEMM kernel} (ZipGEMM). Rather than decompressing weights into global memory as an intermediate step, ZipGEMM performs \textit{on-the-fly decoding}, delivering compressed weights directly into the register files that feed Tensor Core matrix multiplication units. This \textit{"load-compressed, compute-decompressed"} design eliminates intermediate buffers, reduces data movement, and maximizes the overlap between computation and memory access. By jointly addressing both the kernel-level and system-level mismatches, \SystemName{} transforms the theoretical storage savings of lossless compression into tangible performance gains on inference-optimized GPUs.

We demonstrate \SystemName{}'s effectiveness through comprehensive benchmarking against state-of-the-art lossless approaches, including DietGPU~\cite{dietgpu}, vendor-optimized nvCOMP~\cite{NVIDIA_nvcomp_2025}, and the Huffman-based DFloat11~\cite{dfloat11}. Compared to these baselines, which uniformly suffer significant runtime overhead, \SystemName{} consistently delivers substantial accelerations at both the kernel and system level on various inference-optimized GPUs, including RTX4090, L40S, and RTX5090. Our fused ZipGEMM achieves speedups of up to 2.21$\times$ over NVIDIA's cuBLAS, and up to 5.53$\times$ over DFloat11, the fastest lossless compression pipeline. These kernel-level improvements translate into an average 1.22$\times$ end-to-end speedup compared to leading systems like vLLM~\cite{vllm}. Our results demonstrate for the first time that when co-designed with hardware, lossless compression can provide both storage savings and substantial LLM inference acceleration.

The main contributions of this paper are as follows:
\begin{itemize}[noitemsep,nolistsep,,topsep=0pt,parsep=0pt,partopsep=0pt]
    \item We identify the fundamental mismatch between conventional entropy-based compression and GPU architectures, revealing both kernel- and system-level bottlenecks that hinder efficient inference.
    \item We propose TCA-TBE, a fixed-length, bitmap-based encoding tailored to SIMT execution and Tensor Core tiling, enabling constant-time, parallel decoding.
    \item We design ZipGEMM, a novel kernel that performs decompression on-the-fly directly into Tensor Core registers, eliminating intermediate memory buffers and maximizing compute intensity.
    \item We present and evaluate \SystemName{}, a lossless compressed LLM inference framework that achieves end-to-end speedups across diverse LLMs and GPUs, constituting the first practical evidence that lossless compression can directly accelerate LLM serving.
\end{itemize}

\section{Background}

\subsection{Transformer-Based LLMs}

Transformer-based LLMs~\cite{qwen3, mistral, dubey2024llama} are composed of stacked layers of multi-head attention, feed-forward networks~(FFNs), and normalization layers. During inference, computation proceeds \textit{autoregressively} in two phases: prefill and decode. The \textit{prefill} phase parallelizes computation over the input prompt, resulting in high arithmetic intensity due to large matrix multiplications operated over multiple tokens. On the contrary, the \textit{decode} phase generates tokens one at a time, where matrix multiplications involve only a single token per batch element. The decode phase, hence, suffers from reduced compute utilization and greater sensitivity to memory bandwidth. In both phases, the dominant operation is dense matrix multiplication: $Y = W X$, where
$W \in \mathbb{R}^{M \times K}$ is a learned weight matrix and $X \in \mathbb{R}^{K \times N}$ are activations, where
$M$ is the output dimension, $K$ is the hidden dimension, and $N$ is the number of tokens.

\subsection{BFloat16 Format}

\textbf{BFloat16 (BF16)}~\cite{kalamkar2019study} is a 16-bit floating-point format that has become the \emph{de facto} precision standard for LLM inference, balancing memory efficiency with numerical robustness. It is natively supported by major hardware accelerators, including NVIDIA Tensor Cores~\cite{luo2024benchmarking}, Google TPUs~\cite{jouppi2023tpu}, and Intel AMX~\cite{kim2024exploiting}, and is widely adopted in production-scale models, including LLaMA-3~\cite{dubey2024llama}, Qwen~\cite{qwen3}, and Mistral~\cite{mistral}. A BF16 number consists of 1 sign bit, 8 exponent bits, and 7 mantissa bits. Its numerical value is computed as:
\[
\text{BF16}(x) = (-1)^{\text{sign}} \times 2^{\text{exponent} - 127} \times (1.\text{mantissa}).
\]
This layout preserves the full exponent range of IEEE FP32 (1-8-23) while reducing mantissa precision. Compared to FP16 (1-5-10), BF16 offers a wider dynamic range, reducing vulnerability to overflows and underflows in large models.

\subsection{GPU Architecture and Tensor Core Execution}
Modern GPUs comprise multiple Streaming Multiprocessors (SMs), each with SIMT cores, Tensor Cores, registers, shared memory, and local caches. Threads are grouped into warps of 32, executing under the Single Instruction, Multiple Threads (SIMT) paradigm. Tensor Cores are specialized processors for high-throughput matrix multiplications. On recent NVIDIA architectures~\cite{nvidia_GA102whitepaper,nvidia_ada_whitepaper}, Tensor Cores support BF16 operands through the PTX-level \texttt{mma.sync.m16n8k16} instruction, which performs fused matrix multiply-accumulate (FMA) operations across small matrix tiles. A typical BF16 Tensor Core operation can be expressed as: $D_{\text{frag}} = A_{\text{frag}} \times B_{\text{frag}} + C_{\text{frag}}$,
where $A_{\text{frag}} \in \mathbb{R}^{16 \times 16}$, $B_{\text{frag}} \in \mathbb{R}^{16 \times 8}$, and $C_{\text{frag}} \in \mathbb{R}^{16 \times 8}$ is the FP32 accumulator fragment. This operation is executed at the warp level, where a group of 32 threads collaborate to compute the matrix multiplication. The input and output fragments are distributed across the entire warp. Each thread holds a specific subset of fragment elements in its registers, and the complete fragment is formed collectively. 

\section{Gaps and Opportunities}
\label{sec:gaps}

Lossless compression enables \textit{bit-exact} model representation but is rarely used for inference due to high runtime overheads stemming from a mismatch between traditional codecs and GPU architectures. This section quantifies compressibility in LLM weights and identifies key kernels and system-level bottlenecks that motivate our co-designed solution.

\begin{figure}[tbp]
  \centering
  \includegraphics[width=1.0\linewidth]{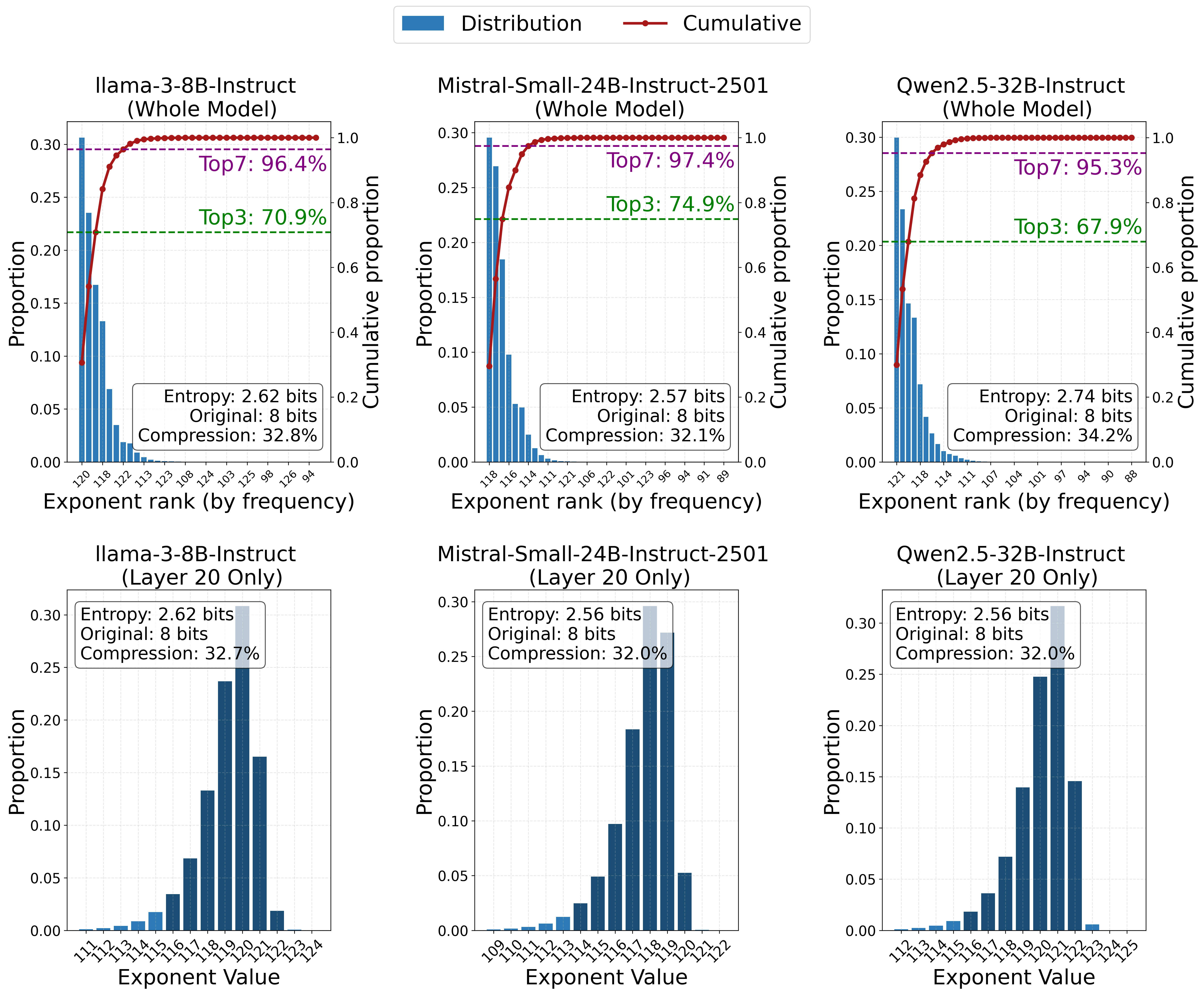}
\caption{Exponent bit distribution in LLM weights.}
\label{fig::dist}
\end{figure}

\subsection{Compressibility of BF16 Weights}\label{sec::3.1}
We analyzed the BF16 weights of leading LLMs, including Llama-3-8B-Instruct~\cite{dubey2024llama}, Mistral-Small-24B-Instruct-2501~\cite{mistral}, and Qwen2.5-32B-Instruct~\cite{yang2024qwen2.5}, and observed remarkable redundancy in their 8-bit exponent fields. As shown in Figure~\ref{fig::dist}, the exponent distributions are \emph{highly skewed}: the \textbf{top-3} most frequent exponents account for more than 67\% of all weights, and the \textbf{top-7} exponents cover over \textbf{95\%} (e.g., 96.4\% in Llama-3 and 97.4\% in Mistral-24B). The information entropy of the exponent field is only \textbf{2.57--2.74 bits}, far below its 8-bit allocation, implying a theoretical lossless compression ratio of about 1.51$\times$ (16/10.6) for BF16 values. These findings are consistent with prior works \cite{dfloat11, neuzip, zipnn, yubeaton2025huffllm, waddington2025lossless}. We further scrutinized this redundancy across 3,875 weight matrices from four LLM families (Gemma-3, Mistral, Qwen2.5, and LLaMA-3.1), revealing a critical structural property: \textit{exponent contiguity}. In 99.6\% of these matrices, the top-7 most frequent exponents form a numerically contiguous sequence (i.e., ${e^\star, \dots, e^\star+6}$). Consequently, a simple contiguous window covers 97.1\% of all weights on average, approaching the information-theoretic limit. In Appendix~\ref{sec:theory_appendix}, we prove that this is not coincidental but an intrinsic property of LLMs. This contiguity is the cornerstone of \SystemName{}. It obviates the need for complex, hardware-unfriendly variable-length codecs (e.g., Huffman) in favor of a \textbf{fixed-length}, base-plus-offset representation. This insight directly enables our Tensor-Core-Aware Triple Bitmap Encoding (TCA-TBE) and its implicit lookup mechanism described in \S\ref{sec::decomp}.

\begin{figure}[tbp]
\centering
\includegraphics[width=0.9\linewidth]{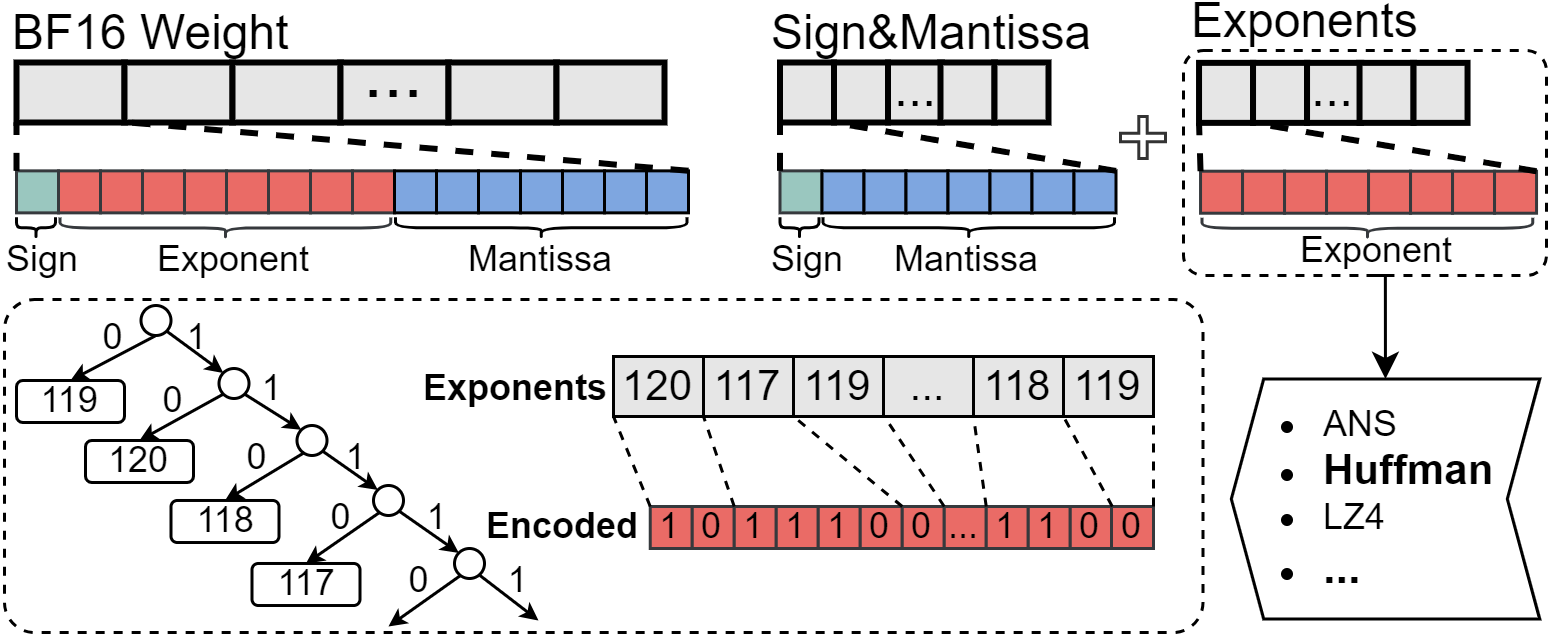}
\caption{Existing Lossless Compression for BF16 Weights. Illustrated with Huffman Encoding.}
\label{fig::lossless}
\end{figure}

\subsection{Kernel-Level Architectural Mismatch}
\label{subsec:kernel-level gaps}
Existing methods exploit the exponent redundancy of BF16 weights by applying entropy coding to the exponent stream. 
For example, DFloat11 uses Huffman coding~\cite{dfloat11}, while DietGPU employs Asymmetric Numeral Systems (ANS)~\cite{dietgpu}. 
As shown in Figure~\ref{fig::lossless}, these approaches produce a compressed bitstream with \textit{variable-length} symbols 
depending on their statistical frequency. However, this bitstream must be decompressed \textit{sequentially} to correctly recover each exponent, 
which fundamentally conflicts with the lockstep, massively parallel SIMT execution model of modern GPUs.

To illustrate this mismatch, we examine the three-stage decompression pipeline in DFloat11~\cite{dfloat11}. \ding{182} \textbf{Bitstream Partitioning.} The bitstream is split into chunks for parallel thread processing. However, because variable-length symbols cross chunk boundaries, threads cannot operate independently but require additional metadata to locate valid symbol start points, introducing overhead and disrupting parallel execution. \ding{183} \textbf{Symbol Extraction.} Threads use hierarchical lookup tables (LUTs) for symbol decoding---a data-dependent operation. When warp threads encounter different symbol lengths, faster threads stall for slower ones, causing divergence and underutilization of GPU resources. \ding{184} \textbf{Pointer Advancement.} After symbol decoding, each thread advances its bit pointer by the symbol's length, which
is only known after the lookup completion. This inherently serializes the decoding loop and sacrifices opportunities for instruction-level parallelism. 
Our evaluation shows that on L40S GPUs, even highly optimized decompressors (e.g., ANS-based DietGPU and Huffman-based DFloat11) achieve only 43.7\% and 76.5\% of peak memory bandwidth, respectively. This inefficiency exposes a fundamental algorithm-hardware mismatch: entropy coding is inherently data-dependent, while efficient GPU execution desires regular, uniform parallelism. 

\begin{figure}[tbp]
  \centering
  \includegraphics[width=0.8\linewidth]{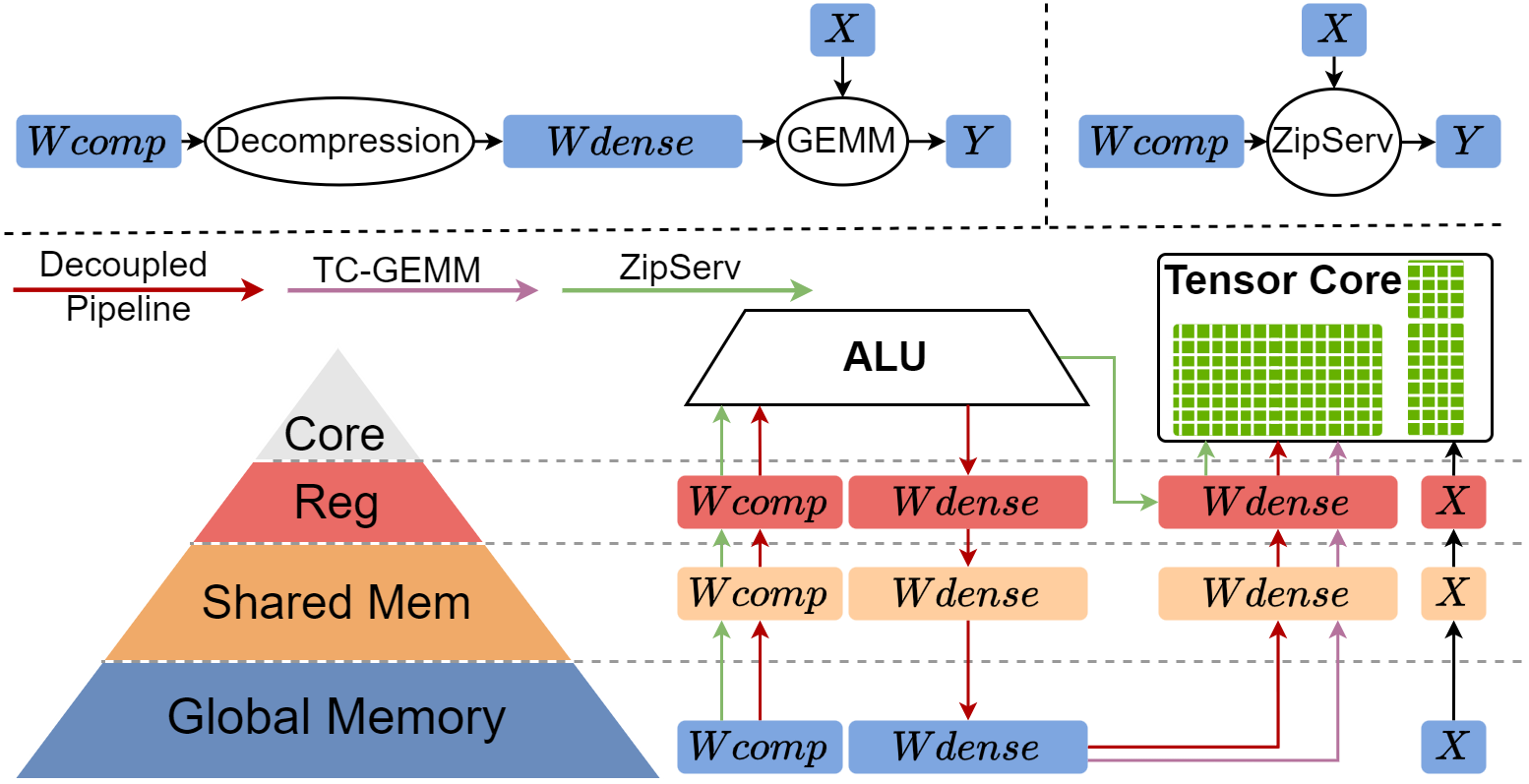}
\caption{Existing lossless compression inference pipeline.}
\label{fig::compare}
\end{figure}

\subsection{Inefficiency of Decoupled Inference Pipeline}

The architectural inefficiency of entropy-coded decoding is found not only at the kernel level, 
but also at the system pipeline level for LLM inference. In mainstream approaches, decompression is 
performed as a separate, decoupled preprocessing stage (see Figure~\ref{fig::compare}): it
materializes the entire decompressed weights in global memory first and then passes it to the compute kernels.
This decoupled pipeline design leads to redundant data transfers, undermining the benefits of compression, particularly 
in bandwidth-constrained environments. We analytically quantify its inefficiency using the Roofline model, focusing
on Compute Intensity (CI).

\PHM{Compute Intensity.} CI measures the number of floating-point operations (FLOPs) performed per byte read from global memory. 
For a typical BF16 GEMM operation $Y_{M \times N} = W_{M \times K} X_{K \times N}$, the compute intensity is:
\begin{equation}
\footnotesize
 CI_{GEMM} = \frac{MNK}{MK + KN + MN}.
\end{equation}
In the decoupled pipeline scenario, assuming an average compression ratio (CR) of 1.51 (\S\ref{sec::3.1}), the CI becomes:
\begin{equation}
\footnotesize
CI_{\text{Decoupled}} = \frac{2MNK}{MK\left(\frac{2}{\text{CR}} + 4\right) + 2(KN + MN)} \approx \frac{MNK}{2.66 MK + KN + MN}.
\end{equation}


\PHM{Roofline Model Analysis.}
Figure~\ref{fig::roofline} illustrates the Roofline analysis on an NVIDIA RTX4090. During the decode stage, both the standard GEMM and the decoupled pipeline operate in the memory-bound regime, where performance scales linearly with CI. However, our analysis highlights a pronounced penalty for the decoupled approach: the additional memory traffic required to materialize intermediate decompressed weights significantly reduces CI. Specifically, for a weight matrix of size $M=K=4096$, the decoupled pipeline exhibits a CI degradation of 62.3\%, 62.2\%, 62.0\%, and 61.7\% relative to standard GEMM for batch sizes of 8, 16, 32, and 64, respectively.

\begin{figure}[tbp]
  \centering
  \includegraphics[width=1.0\linewidth]{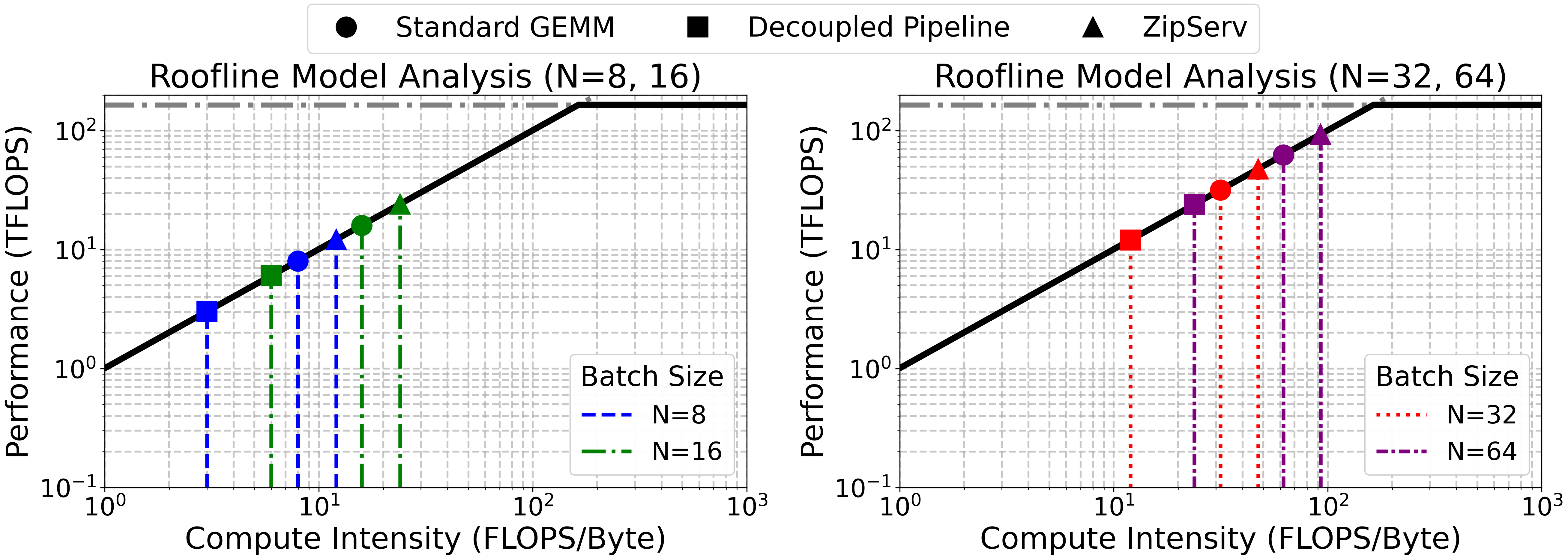}
\caption{Roofline analysis.}
\label{fig::roofline}
\end{figure}

\PHM{\SystemName{}'s Fused Design.} 
The inefficiency of decoupled pipelines arises directly from staging decompressed weights in global memory.
\SystemName{} addresses this by introducing a fused decompression-GEMM kernel that directly fetches 
compressed weights from DRAM and decompresses them on-the-fly into register files, which immediately feed the Tensor Core. 
This approach effectively increases CI to
\begin{equation}
\footnotesize
CI_{\SystemName{}} = \frac{2MNK}{MK \cdot \frac{2}{\text{CR}} + 2(KN + MN)} \approx \frac{MNK}{ 0.66MK + KN + MN}.
\end{equation}
Revisiting the Roofline model in Figure~\ref{fig::roofline}, \SystemName{}'s fused execution ($CI_{\SystemName{}}$) demonstrates a 
substantial improvement, achieving even higher CI (approximately  50\%) than the uncompressed GEMM baseline. This benefit, most pronounced in memory-bound regimes, leads to linear speedups relative to the compression ratio, translating information-theoretic redundancy into wall-clock acceleration.

\begin{figure*}[tbp]
  \centering
\includegraphics[width=0.9\linewidth]{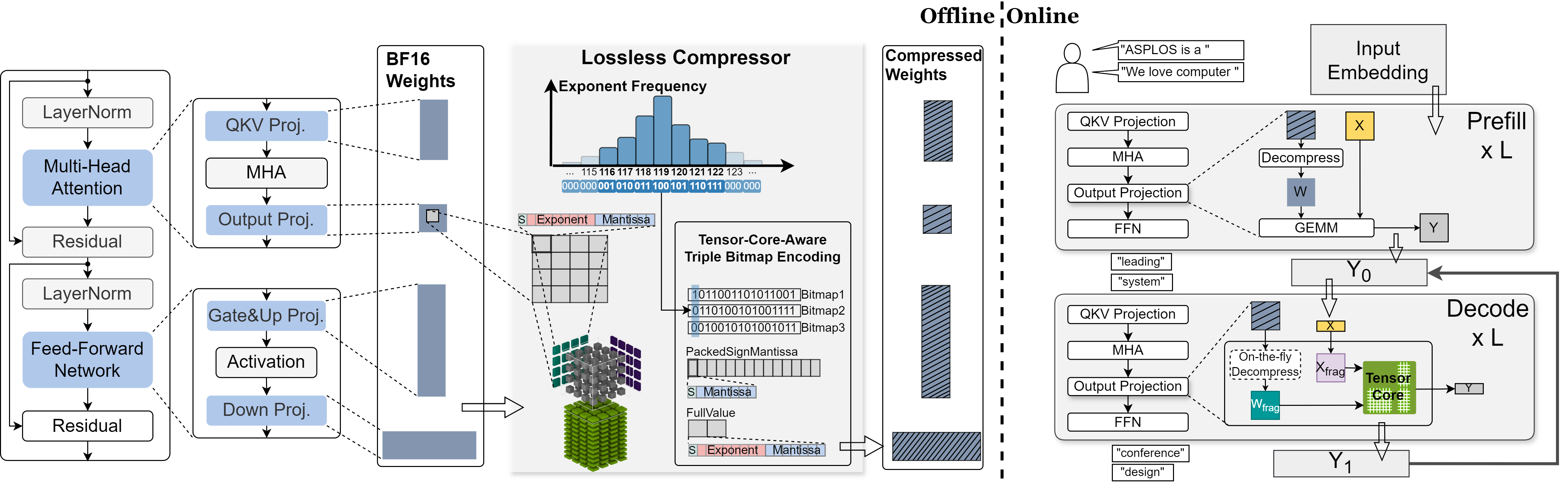}
\caption{Overview of \SystemName{}. \SystemName{} features an offline lossless compressor (left) and an online inference engine (right).}
\label{fig::overview}
\end{figure*}

\section{Design of \SystemName{}}
\label{sec:design}

Our earlier analysis identifies both kernel-level and system-level sources of inefficiency that hinder the decoding of lossless compression in LLM inference. In this section, we present \SystemName{}, a lossless compression system co-designed for storage efficiency and \textit{fast, bit-exact} LLM inference.

\subsection{Overview and Workflow}

As illustrated in Figure~\ref{fig::overview},  
\SystemName{} consists of two main components: an \textit{offline compressor}, which transforms BF16 model weights into a parallelization-friendly compressed representation, and an \textit{online inference engine}, responsible for efficient decoding and computation at runtime.


\PHM{Offline Compressor.}
At the core of the offline compressor is the \textit{Tensor-Core-Aware Triple Bitmap Encoding} (TCA-TBE), a fixed-length, bitmap-based compression format designed to enable \textit{parallel decoding} via GPU SIMT execution and Tensor Core–accelerated GEMM operations. As outlined in \textbf{Algorithm~\ref{alg:compression}}, given a model, the compressor first profiles the exponent distribution of each layer’s weights. Instead of selecting arbitrary frequent exponents, it identifies a window of $k$ \textit{numerically consecutive} exponent values (typically $k=7$) that maximizes coverage of the weight distribution. The compressor records the value immediately preceding this range as the \texttt{BaseExp} (i.e., $\min(\text{range}) - 1$).
Using this range, the compressor encodes the entire weight matrix into the TCA-TBE representation. Each $8 \times 8$ tile of weights is converted into three 64-bit bitmaps and two compact value buffers: one for high-frequency values falling within the selected exponent range (storing only the sign and mantissa relative to \texttt{BaseExp}), and another for outliers in full BF16 precision. The resulting compressed model is then loaded onto the GPU, ready for serving.

\PHM{Online Inference Engine.} 
The inference engine employs a stage-aware strategy that adapts the execution pipeline for the prefill and decode phases, all on the unified TCA-TBE format.
During the compute-bound \textbf{prefill stage}, the engine performs \textit{decoupled execution}: a dedicated decompression kernel decompresses the weights into global memory first, followed by the prefill computation. This approach allows high-throughput GEMM to effectively amortize the decompression overhead.
In the memory-bound \textbf{decode stage}, the engine switches to a fused decompression-GEMM kernel (ZipGEMM). ZipGEMM enables a \emph{``load-compressed, compute-decompressed''} execution model, where weights are decompressed on-the-fly directly into Tensor Core registers. This eliminates redundant data transfers and maximizes compute intensity for each token generation. These two specialized execution paths deliver near-optimal inference performance.

\begin{algorithm}[tbp]
  \caption{\SystemName{} Offline Compressor (TCA-TBE)}\label{alg:compression}
  \begin{algorithmic}[1]
      \REQUIRE Weight Matrix $\mathcal{W}$, Tile Size $T=8\times8$
      \ENSURE Bitmaps $\mathcal{B}_{1..3}$, High-Freq Buffer $\mathcal{H}$, Fallback Buffer $\mathcal{L}$, BaseExp $e_{base}$
      
      \STATE \textcolor{blue}{$\triangleright$ Phase I: Global Exponent Analysis}
      \STATE $Hist \leftarrow \textsc{ComputeExponentHistogram}(\mathcal{W})$
      \STATE $E_{top} \leftarrow \textsc{SelectTop7ConsecutiveExponents}(Hist)$
      \STATE $e_{base} \leftarrow \min(E_{top}) - 1$ \hfill \textcolor{blue}{$\triangleright$ Set base for implicit lookup}
      
      \STATE \textcolor{blue}{$\triangleright$ Phase II: Tile Encoding}
      \FOR{\textbf{each} tile $t \in \mathcal{W}$}
          \STATE Initialize local bitmaps $b_1, b_2, b_3 \leftarrow 0$
          \FOR{$i = 0$ \textbf{to} $63$}
              \STATE $w \leftarrow t[i]$; \quad $e \leftarrow w.exponent$
              \IF{$e \in E_{top}$}
                  \STATE $c \leftarrow e - e_{base}$ \hfill \textcolor{blue}{$\triangleright$ Compute 3-bit code $c \in [1, 7]$}
                  \STATE $b_1[i] \leftarrow c_0; \ b_2[i] \leftarrow c_1; \ b_3[i] \leftarrow c_2$ \hfill \textcolor{blue}{$\triangleright$ Set bits}
                  \STATE $\mathcal{H}.\textsc{Push}(\textsc{Pack}(w.sign, w.mantissa))$
              \ELSE
                  \STATE $\mathcal{L}.\textsc{Push}(w)$ \hfill \textcolor{blue}{$\triangleright$ Store full precision fallback}
              \ENDIF
          \ENDFOR
          \STATE Store $b_1, b_2, b_3$ to global $\mathcal{B}_{1..3}$
      \ENDFOR
  \end{algorithmic}
\end{algorithm}

\begin{figure}[tbp]
  \centering
  \includegraphics[width=0.9\linewidth]{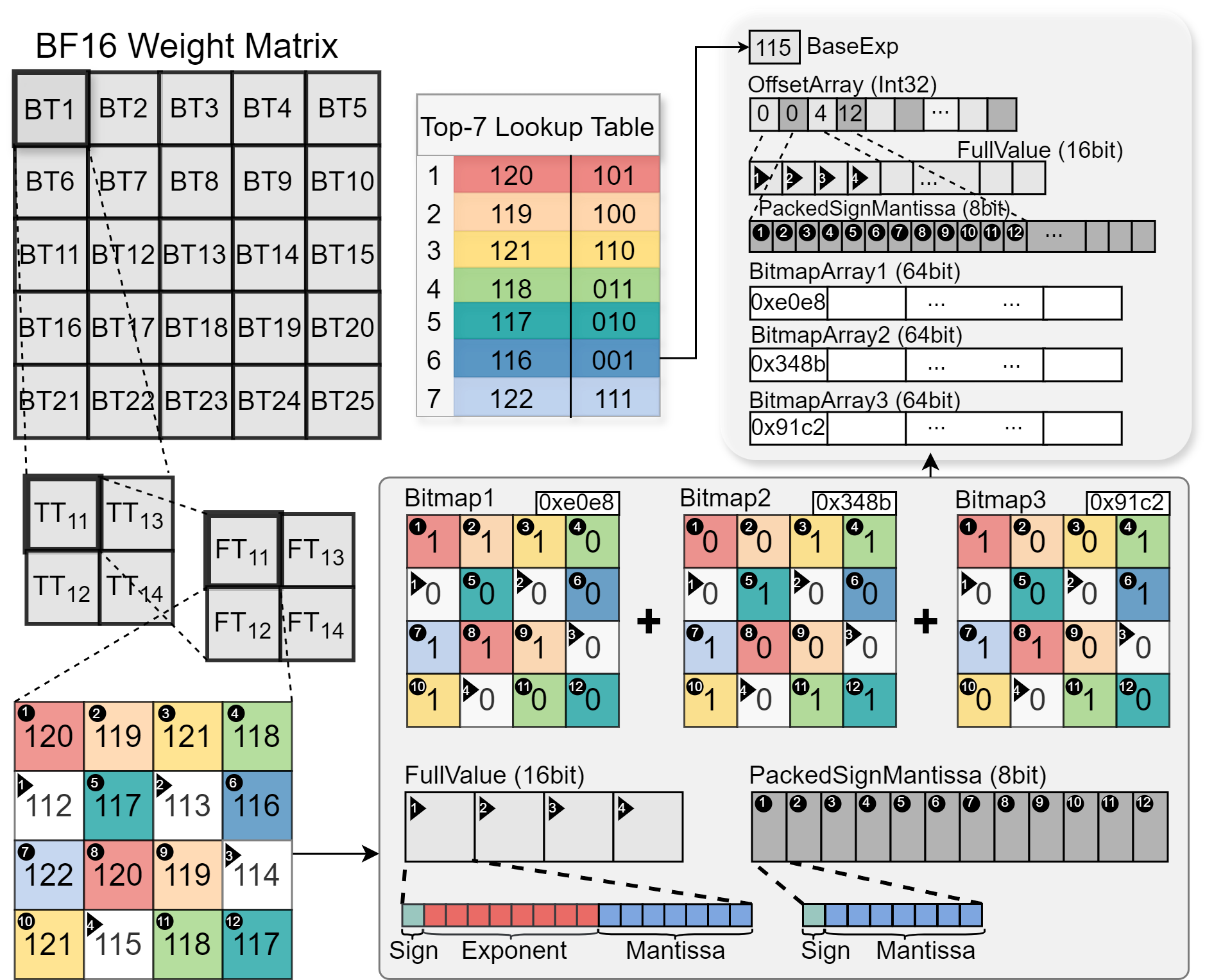}
\caption{Tensor-Core-Aware Triple Bitmap Encoding. The $4 \times 4$ FragTile shown is illustrative; the actual size is $8 \times 8$.}
\label{fig::tcabme}
\end{figure}

\subsection{Tensor-Core-Aware Triple Bitmap Encoding} \label{sec::tca-bme}
\SystemName{} is built on top of a novel Tensor-Core-Aware Triple Bitmap Encoding (TCA-TBE) scheme. It is designed to minimize the weight memory footprint while enabling efficient parallel decoding on GPUs. In contrast to existing variable-length bitstream-based entropy codecs, TCA-TBE employs a fixed-length, tile-structured representation that ensures constant-time, thread-local decompression. Its data layout is carefully aligned with Tensor Core tiling and register-level operand distribution, allowing the decompressed weights to be consumed directly by the \texttt{mma.sync} instruction. The core of TCA-TBE is a fixed-length \textbf{3-bit codeword} assigned to each weight element, representing one of eight possible states (\texttt{000}–\texttt{111}). During offline compression, \SystemName{} profiles the exponent histogram of a weight matrix and identifies the top-7 most frequent exponent values and maps them to codewords \texttt{001}–\texttt{111}. The special codeword \texttt{000} serves as a fallback, designating weights whose exponent falls outside the top-7, which are then stored in full precision. 

\PHM{The Choice of Codeword Length.} We choose the 3-bit codeword because it achieves a near-optimal compression ratio by leveraging the highly skewed exponent distributions observed in contemporary LLMs. To quantify this design choice, we calculate the expected per-element storage cost as:
\[
AverageBits(n) = r_n \cdot (n + 8) + (1 - r_n) \cdot (n + 16),
\]
where $n$ is the codeword length and $r_n$ is the proportion of weights covered by the top $2^n - 1$ exponent values. As shown in \S\ref{sec::3.1}, $r_3 \approx 0.96$, yielding an average of 11.3 bits per element, which approaches the theoretical lower bound (8+2.6=10.6 bits) and offers clear advantages over 2-bit (12.4 bits) and 4-bit (12.1 bits) codewords. Besides, the 3-bit encoding yields a compact 7-entry codebook, enabling decoding via a simple table lookup. This requires only a handful of bitwise operations per thread, which can be efficiently performed with warp-synchronous Tensor Core pipelines.

\PHM{Decoupled Triple Bitmap Layout.} To maximize decoding efficiency on SIMT architectures, TCA-TBE implements a \textit{decoupled triple bitmap layout} rather than packing codewords into a dense bitstream. Conventional bitstreams are inefficient on GPUs because packing non-byte-aligned codes (e.g., 3-bit) forces codewords to span memory word boundaries. This necessitates complex logic for non-aligned accesses and introduces data-dependent branching, which in turn causes thread divergence that severely degrades SIMT throughput.

TCA-TBE avoids these bottlenecks by decomposing the 3-bit codewords for each $8 \times 8$ weight tile into three independent 64-bit bitmaps, with each bitmap representing a single bit-plane (Figure \ref{fig::tcabme}). This design enables two benefits. First, it guarantees coalesced memory accesses, as each bitmap is a contiguous 64-bit word, naturally aligned to native memory boundaries. Second, it enables branch-free decoding. All threads in a warp follow an identical execution path, aligning with the SIMT model on modern GPUs.

\PHM{Hierarchical Tiling Design.}
TCA-TBE adopts a three-level hierarchical tiling scheme that partitions the weight matrix according to the architectural granularity of modern GPUs. \ding{182} FragTile (FT): The base unit is an $8 \times 8$ tile, matching the smallest operand fragment of Tensor Core instruction. \ding{183} TensorCoreTile (TT): Each $16 \times 16$ tile is composed of a $2 \times 2$ grid of FragTiles. This size aligns with the operand dimensions (\texttt{m=16}, \texttt{k=16}) required by PTX-level Tensor Core \texttt{mma} instructions (\texttt{mma.m16n8k16}). \ding{184} BlockTile (BT): At the coarsest level, a $64 \times 64$ tile aggregates multiple TensorCoreTiles and is processed cooperatively by a thread block. The FragTiles within a TensorCoreTile are stored in \textit{column-major order}, mirroring the operand register layout (e.g., Ra0–Ra3) expected by Tensor Core instructions. This design eliminates the need for runtime coordinate transformation, reducing instruction overhead. Each $8 \times 8$ FragTile is encoded using five buffers. \ding{182} Three 64-bit bitmaps, each representing one bit-plane of the 3-bit codewords. \ding{183} A \texttt{PackedSignMantissa} buffer, which holds the compact 8-bit representation (sign and mantissa) of weights whose exponents fall within the top-$k$ frequent classes. \ding{184} A \texttt{FullValue} buffer, which stores full-precision BF16 values for weights not covered by the exponent codebook. At the matrix level, TCA-TBE organizes these buffers into four contiguous global arrays, each nested according to the tiling hierarchy. In addition, an \texttt{Offset} array records the starting offset of each GroupTile within the \texttt{PackedSignMantissa} and \texttt{FullValue} arrays.

\begin{figure}[tbp]
  \centering
  \includegraphics[width=0.95\linewidth]{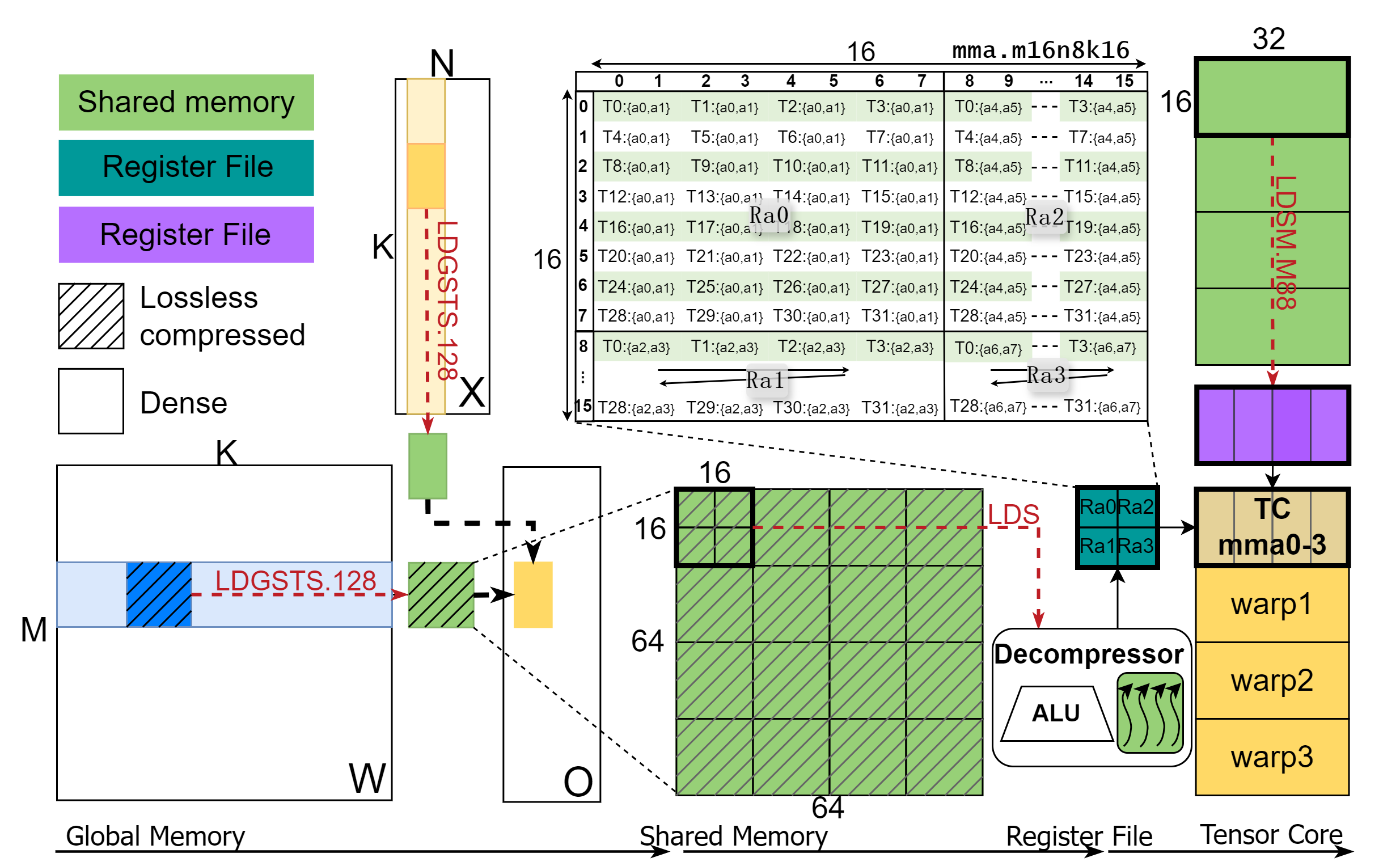}
\caption{Data movement and instruction pipeline.}
\label{fig::move}
\end{figure}

\subsection{Fused ZipGEMM Kernel Design}
\label{sec:ZipGEMM}

TCA-TBE's SIMT-friendly design opens up new opportunities for high-throughput decoding. To achieve this, \SystemName{} fuses decompression and matrix multiplication into a single kernel, \textbf{ZipGEMM}, that fetches weights from global memory in a compact TCA-TBE format and decompresses them just-in-time during computation. ZipGEMM enables a \textbf{\textit{load-compressed, compute-decompressed}} execution model, substantially reducing the memory bandwidth requirement for each token generation in the
decode stage (see Figure~\ref{fig::roofline}). 

\begin{figure*}[tbp]
  \centering
  \includegraphics[width=0.85\linewidth]{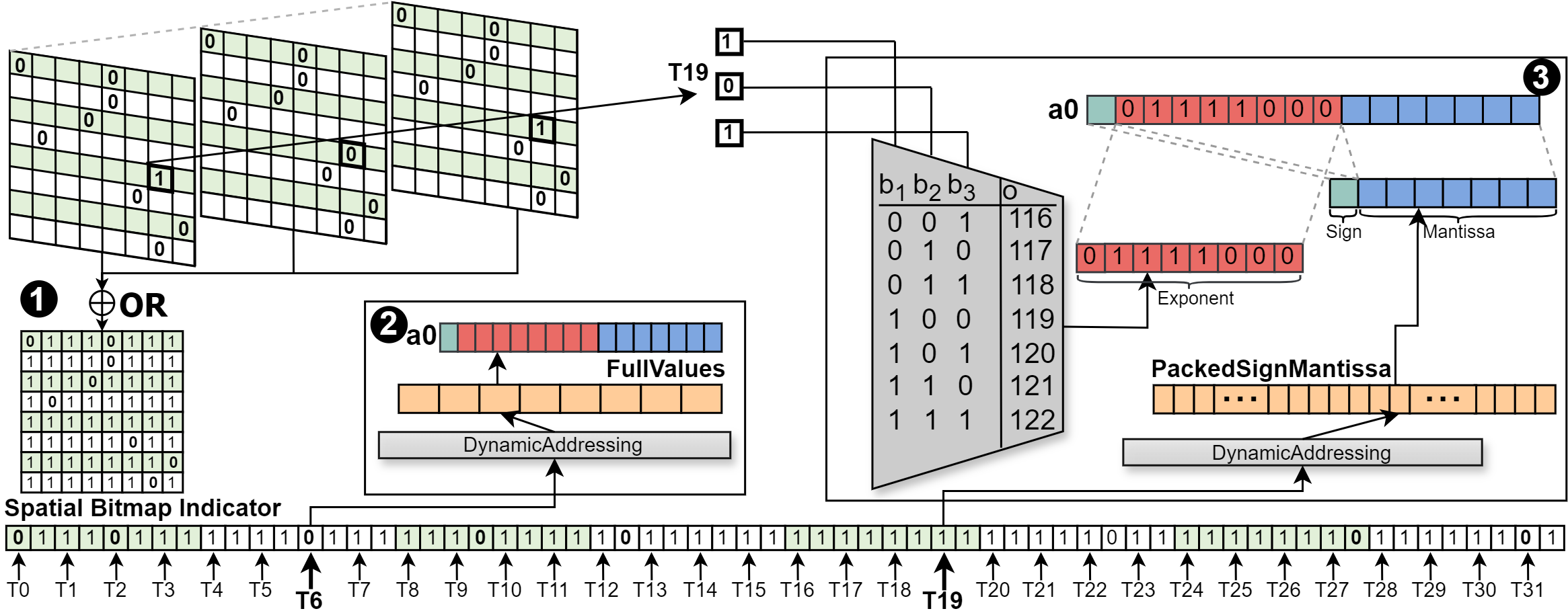}
\caption{The Decompressor Design.}
\label{fig::bitmapdecode}
\end{figure*}
\subsubsection{Kernel Workflow}
\label{sec:ZipGEMM-workflow}

Figure~\ref{fig::move} illustrates the workflow of the ZipGEMM kernel. Based on a split-K tiling architecture, each thread block iteratively processes the $K$ dimension in chunks. In each iteration, the kernel proceeds through four coordinated stages. \ding{182} Tile Loading. Threads cooperatively load the compressed weight tile and the corresponding activation tile from global memory into shared memory, with asynchronous and vectorized memory instructions (i.e., \texttt{LDGSTS.128}) to bypass the L1 cache and improve global memory bandwidth utilization. The \texttt{PackedSignMantissa} and \texttt{FullValue} arrays within each tile are padded offline to ensure 128-bit alignment. \ding{183} Warp-Level Decoding. Each warp independently decompresses the compressed weight from shared memory. The decompressor reconstructs the original BF16 values in a layout compatible with Tensor Core consumption, utilizing lightweight ALU operations and avoiding shared memory round-trips.
\ding{184} Activation Register Transfer. The activation tile is moved from shared memory into registers using the \texttt{LDSM.M88} instruction, which enables a warp to load a $16 \times 16$ tile and arrange it in the layout required for Tensor Cores. \ding{185} Tensor Core Computation. Once both decompressed weights and activations reside in registers, the warp performs Tensor Core \texttt{mma} instructions. The execution path closely mirrors the standard cuBLAS GEMM kernels, while operating directly on compressed representations and reducing global memory accesses.

\subsubsection{Efficient Decompressor}\label{sec::decomp}
ZipGEMM incorporates an efficient Decompressor that enables thread-local reconstruction of compressed weights directly within the register file. The core principle of the Decompressor is that \textit{each thread independently decompresses the elements required for the proper Tensor Core fragment layout}. Specifically, as shown in Figure \ref{fig::tcabme}, the fragment layout requires that thread~$i$’s \texttt{.bf16x2} register (e.g., \texttt{Ra0}) holds the values at positions $2i$ and $2i+1$ within the $8 \times 8$ tile, denoted as $a_0$ and $a_1$ respectively. Since each element is encoded in one of two states---either as a high-frequency fixed-length code or as a fallback full-precision value---and these states are distributed in an unstructured manner, the decompressor solves a sparse, non-uniform spatial reconstruction problem. Two challenges arise in this context. First, each thread must efficiently determine the state of its assigned element (compressed or fallback). Second, each thread should recover the original BF16 representation in a deterministic, SIMT-friendly manner. To this end, \SystemName{}’s Decompressor is structured into three tightly integrated stages: spatial bitmap indicator, dynamic addressing, and fast exponent reassembly (see Figure \ref{fig::bitmapdecode} and Algorithm~\ref{alg:decompression}).

\begin{algorithm}[t]
  \caption{ZipGEMM Thread-Local Decompression}\label{alg:decompression}
  \begin{algorithmic}[1]
      \REQUIRE Bitmaps $\mathcal{B}_{1..3}$, Buffers $\mathcal{H}, \mathcal{L}$, BaseExp $e_{base}$, LaneID~$l$
      \ENSURE Register pair $R$ containing two BF16 values
      
      \STATE \textcolor{blue}{$\triangleright$ Step 1: Spatial Indicator Construction}
      \STATE $\mathcal{M} \leftarrow \mathcal{B}_1 \lor \mathcal{B}_2 \lor \mathcal{B}_3$
      \STATE \textcolor{blue}{$\triangleright$ Step 2: Parallel Element Decompression}
      \FOR{$k \in \{0, 1\}$}
          \STATE $p \leftarrow 2 \cdot l + k$ \hfill \textcolor{blue}{$\triangleright$ Global position in $8 \times 8$ tile}
          \STATE $mask \leftarrow (1 \ll p) - 1$
          \STATE $idx_{\mathcal{H}} \leftarrow \textsc{Popc}(\mathcal{M} \ \& \ mask)$ \hfill \textcolor{blue}{$\triangleright$ Calculate index}
          
          \IF{$(\mathcal{M} \gg p) \ \& \ 1$}
              \STATE \textcolor{blue}{$\triangleright$ Case A: High-Frequency Path}
              \STATE $val \leftarrow \mathcal{H}[\text{start}_{\mathcal{H}} + idx_{\mathcal{H}}]$ \hfill \textcolor{blue}{$\triangleright$ Fetch Sign + Mantissa}
              \STATE  \textcolor{blue}{$\triangleright$Reconstruct 3-bit code}
              \STATE $c \leftarrow (\mathcal{B}_3[p] \ll 2) \lor (\mathcal{B}_2[p] \ll 1) \lor \mathcal{B}_1[p]$ 
              \STATE $e \leftarrow e_{base} + c$ \hfill \textcolor{blue}{$\triangleright$ Implicit Lookup}
              \STATE $w_k \leftarrow \textsc{MakeBF16}(val.sign, e, val.mantissa)$
          \ELSE
              \STATE \textcolor{blue}{$\triangleright$ Case B: Fallback Path}
              \STATE $idx_{\mathcal{L}} \leftarrow p - idx_{\mathcal{H}}$ \hfill \textcolor{blue}{$\triangleright$ Calculate index in fallback}
              \STATE $w_k \leftarrow \mathcal{L}[\text{start}_{\mathcal{L}} + idx_{\mathcal{L}}]$
          \ENDIF
      \ENDFOR
      
      \STATE $R \leftarrow \textsc{PackRegister}(w_0, w_1)$
      \STATE \textbf{return} $R$
  \end{algorithmic}
\end{algorithm}

\PHM{Spatial Bitmap Indicator.} Each thread first determines the storage mode of its assigned elements by evaluating a spatial indicator mask. During offline compression, each $8 \times 8$ weight tile is encoded using three 64\texttt{-}bit bitmaps, where each bitmap encodes a single bit of the 3\texttt{-}bit codeword. At runtime, the three bitmaps are combined using a warp\texttt{-}level bitwise \texttt{OR} to produce a single 64\texttt{-}bit indicator mask. Each bit in this mask specifies the storage mode of one element: 1 for compressed (high\texttt{-}frequency), 0 for fallback (uncompressed). Each thread determines its decoding path by inspecting the corresponding bits in this spatial indicator mask, which resides in registers. Specifically, for thread~$i$, the bits at positions $2i$ (for $a_0$) and $2i+1$ (for $a_1$) indicate the state of the two assigned elements. For instance, Thread~19 finds that bit~38 ($2 \times 19$) is set, indicating its $a_0$ element is stored in compressed form. It fetches the packed value from the high\texttt{-}frequency buffer and proceeds with exponent reassembly. In contrast, Thread~6 sees that bit~12 ($2 \times 6$) is unset and simply loads its $a_0$ directly from the fallback buffer. This bitwise decision process is lightweight, fully register\texttt{-}resident, and completes in constant time.

\PHM{Dynamic Addressing.} 
Once the storage mode is determined, each thread computes its read offset into the appropriate value buffer on-the-fly, without explicit per-element indices. This is achieved via a lightweight, warp-local prefix sum over the spatial indicator. For thread $i$, the offset is calculated by counting how many previous elements of the same storage type appear in bits $[0,\, 2i - 1]$ of the spatial indicator. Specifically, if the element is compressed ($\text{bit} = 1$), the offset equals the number of $1$s; if uncompressed ($\text{bit} = 0$), it equals the number of $0$s in that range. These counts are efficiently computed using GPU-native instructions such as \texttt{\_\_popc()} and \texttt{\_\_shfl\_sync()}. For example, Thread~6, encountering an unset bit at position~12, computes its fallback buffer offset by counting the number of $0$s in bits $[0,\, 11]$. Thread~19, with bit~38 set, counts the number of $1$s in bits $[0,\, 37]$ to access the compressed buffer. This dynamic addressing mechanism transforms indexing into a deterministic, SIMT-friendly operation that aligns naturally with GPU execution patterns.

\PHM{Fast Exponent Reassembly via Implicit Lookup.} To further reduce the decoding overhead, \SystemName{} reconstructs exponents using an \textit{implicit lookup} mechanism based on \textit{arithmetic remapping}, avoiding table-based decoding. During offline compression, the top-7 most frequent exponent values are identified globally and assigned 3-bit codewords (\texttt{001}–\texttt{111}), ordered by increasing numerical value instead of frequency rank. A single global base exponent is recorded as $\texttt{base\_exp} = \min(\texttt{top\_exponents}) - 1$, which is shared by all tiles. At runtime, each thread reconstructs the original exponent by adding the 3\texttt{-}bit codeword to the base exponent. This operation eliminates shared memory table lookups by using a single integer ALU instruction. The recovered exponent is then fused with the sign and mantissa fields to assemble a valid BF16 value. For example, Thread~19 observes that bit~38 in the spatial indicator is set and reconstructs the 3\texttt{-}bit codeword by reading the corresponding bits from the three bitmap planes, yielding \texttt{101} (5). With a global base exponent of~115, it recovers the original exponent as $115 + 5 = 120$, then combines it with the sign and mantissa to form the final BF16 value. This arithmetic decoding process is fully SIMT-compatible, exploits the GPU’s integer pipelines.

\PHM{Repacking into Tensor Core Fragments.} Each thread repacks the two reconstructed BF16 elements into a single \texttt{bfloat162} register, matching the operand layout required by Tensor Core \texttt{mma.sync} instructions.

\subsubsection{Fine-grained Software Pipeline}

ZipGEMM uses a hierarchical two-level pipeline to overlap memory transfer, decompression, and computation, effectively hiding memory and decompression latency. At the coarse level, tile-wise double buffering overlaps global-to-shared memory transfers with computation; at the fine level, slice-wise interleaving overlaps shared-to-register movement and decompression with Tensor Core operations. This is implemented via two shared memory buffers for compressed weights (triple bitmaps, packed sign-mantissa, fallback values) and activations. Within each tile, computation is sliced along the K dimension (typically $16 \times 16$ fragments) and processed using an interleaved load-decompress-compute pattern. While Tensor Cores execute matrix multiplication (\texttt{mma}) on slice $i$, ALU units concurrently load and decompress weights for slice $i+1$ from shared memory into registers. This ensures a steady compute flow by hiding decompression and memory latency behind computation.

To coordinate the two pipeline levels, ZipGEMM uses a hierarchical barrier strategy for inter-tile and intra-warp synchronization. \textbf{Inter-tile synchronization:} \texttt{cp.async.wait\_group<0>()} and \texttt{\_\_syncthreads()} ensure all asynchronous transfers complete before switching buffers. This barrier is placed \emph{after the final slice decompression but before the final slice \texttt{mma}}, allowing computation to proceed while the next tile is being loaded and decompressed, which maximizes overlap and minimizes stalls. \textbf{Intra-warp coordination:} Intra-warp operations are implicitly synchronized via the SIMT model, requiring no explicit barriers between load, decompress, and compute at the slice level.

\begin{figure}[tbp]
  \centering
  \includegraphics[width=1.0\linewidth]{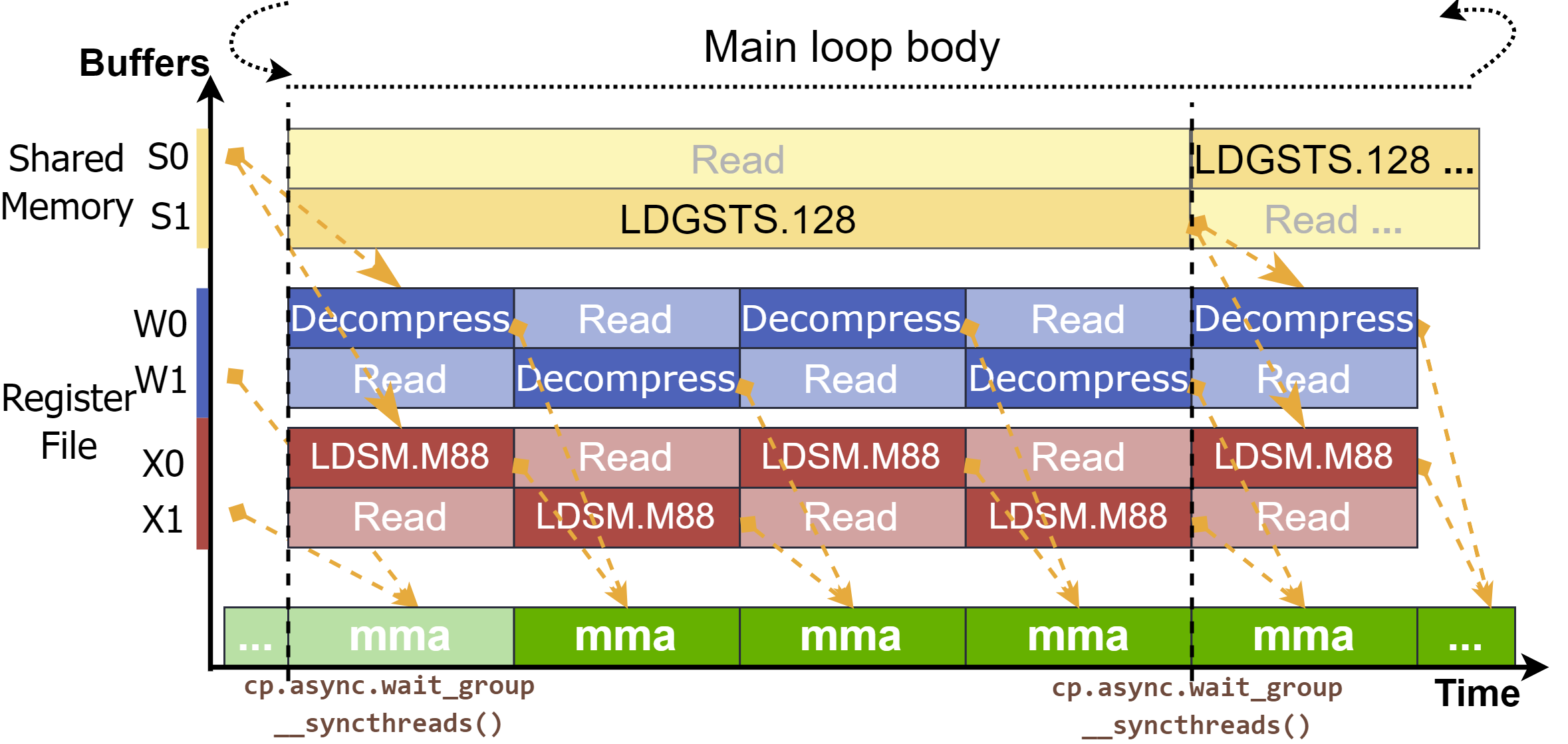}
  \caption{Hierarchical software pipeline design.}
  \label{fig::pipeline}
\end{figure}

\subsection{Stage-Aware Inference Strategy}

\SystemName{} uses the fused ZipGEMM kernel exclusively during the decode stage for accelerated token generation. For the compute-bound prefill stage, where large matrix dimensions ($N=BS\times Seq\_len$) provide high arithmetic intensity, \SystemName{} falls back to a decoupled pipeline: an efficient decompression kernel first extracts the compressed weights to global memory, then performs high-throughput GEMM operations to amortize the decompression overhead (typically $<$4\% as shown in \S\ref{sec::overhead}). In both prefill and decode stages, the decompression kernel and ZipGEMM kernel share the same compressed format and per-thread decompression logic (\S\ref{sec::decomp}), obviating the need for runtime format conversions.

\begin{figure*}[tbp]
  \centering
  \includegraphics[width=0.98\linewidth]{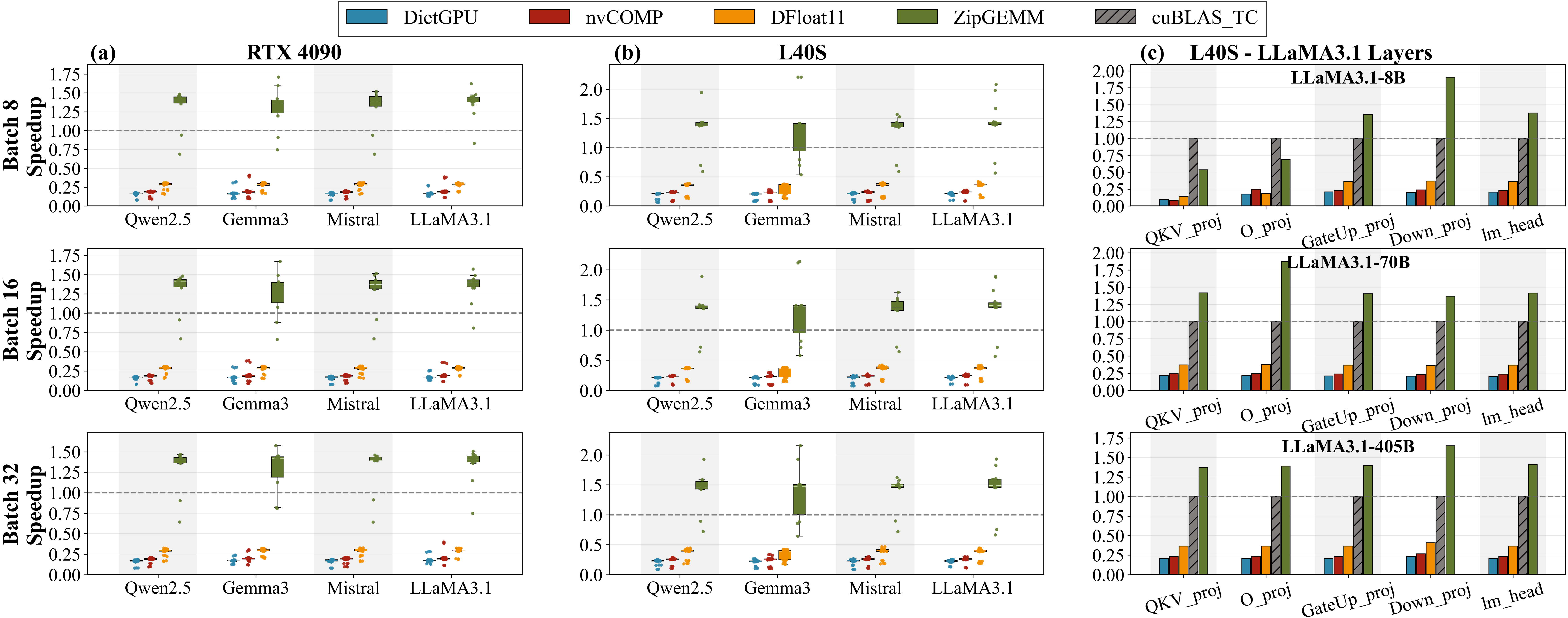}
  \caption{Kernel performance comparison on NVIDIA RTX4090 and L40S GPUs.} 
  \label{fig::speedup}
\end{figure*}

\section{Implementation}
\label{sec:impl}
We implemented \SystemName{} as a high-performance, modular inference backend comprising approximately 3.5K lines of code. The core engine consists of about 2.5K lines of CUDA and C++, which implements the offline TCA-TBE compressor and the online ZipGEMM kernel. The kernel is compiled into a standalone shared library (\texttt{.so}) using \texttt{nvcc}, exposing C++ APIs for weight packing and kernel launching. The remaining 1.0K lines are Python glue code used to integrate \SystemName{} into \textbf{vLLM}~\cite{vllm}. We extended vLLM's model loader and linear execution modules to support the TCA-TBE format, utilizing \texttt{PyBind11} to invoke our custom CUDA kernels.

\section{Evaluation}
We evaluate the performance of \SystemName{} at two levels: the kernel level of the fused ZipGEMM and the standalone Decompression kernel (\SystemName{}-Decomp), and the end-to-end inference framework level. All experiments are conducted on two platforms. \ding{182} A consumer-grade server equipped with 4$\times$ NVIDIA RTX4090 GPUs (Ada Lovelace, 24GB memory, Compute Capability 8.9), paired with an Intel Xeon Platinum 8352V CPU (144 cores, 512GB DDR4). \ding{183} A datacenter platform with 4$\times$ NVIDIA L40S GPUs (Ada Lovelace, 48GB), paired with an Intel Xeon Gold 6230R CPU (104 cores, 512GB DDR4). We also evaluate ZipGEMM on the latest \ding{184} RTX5090 GPU (Blackwell, 32GB, Compute Capability 12.0) to demonstrate forward compatibility. All code is compiled using GCC 11.3 and NVCC 12.4 (with NVCC 12.8 specifically for RTX5090). For kernel-level evaluation, we perform 100 warm-up iterations followed by 1,000 timed executions. For end-to-end evaluation, each configuration is run 10 times.

\subsection{ZipGEMM Kernel Performance}
\PHM{Datasets.} We benchmark the kernel-level performance
on representative linear layers from state-of-the-art LLMs. The input shapes for kernel benchmarking are directly extracted from the real weight matrices of prominent LLM families, including LLaMA3.1~\cite{dubey2024llama} (8B, 70B, and 405B), Qwen2.5~\cite{yang2024qwen2.5} (7B, 14B, 32B, and 72B), Gemma3~\cite{gemma3} (12B and 27B), and Mistral~\cite{mistral} (24B and 123B), covering a broad range of model scales and hidden dimensions.

\PHM{Baselines.} We compare ZipGEMM against four representative baselines: \ding{182} cuBLAS\_TC v12.4.5~\cite{NVIDIA2024cuBLAS}, NVIDIA’s official BF16 Tensor Core GEMM kernel; \ding{183} DietGPU~\cite{dietgpu}, a popular open-source, GPU-native rANS codec for lossless decompression of floating-point weights; \ding{184} nvCOMP (rANS)~\cite{NVIDIA_nvcomp_2025}, NVIDIA’s general-purpose asymmetric numeral systems-based decompression library; and \ding{185} DFloat11~\cite{dfloat11}, a state-of-the-art Huffman-coded GPU decompression framework for LLM inference. Since nvCOMP lacks native BF16 support, we compress exponent bits as a bitstream via rANS and reconstruct BF16 values with a custom high-performance kernel. For DFloat11, whose compression code is unavailable, we benchmark full Transformer block decompression latency and linearly scale estimates for other matrix shapes.

\PHM{Workloads.} We profile all linear layers within a Transformer block, including the merged QKV projection (\texttt{QKV\_proj}), attention output projection (\texttt{O\_proj}), merged FFN gate and up projection (\texttt{GateUp\_proj}), and down projection (\texttt{Down\_proj}), along with the model's LM head layer. Benchmarks are conducted at batch sizes of 8, 16, and 32.

\PHM{Results.} We begin by evaluating the performance of our fused \textit{ZipGEMM} kernel. Figure~\ref{fig::speedup} shows the normalized speedup relative to cuBLAS\_TC across all evaluated models and workloads. ZipGEMM consistently outperforms all baseline methods on both hardware platforms. On the RTX4090, ZipGEMM achieves an average speedup of $1.31\times$ over cuBLAS\_TC, with a peak speedup of $1.71\times$. The advantage is even greater on the L40S, with an average speedup of $1.36\times$ and a maximum of $2.21\times$. In contrast, other decoupled decompression methods introduce substantial overhead, resulting in significant slowdowns. Specifically, DietGPU, nvCOMP, and DFloat11 achieve average speedups of only $0.17\times/0.20\times$, $0.19\times/ 0.23\times$, and $0.28\times/ 0.34\times$ on RTX4090 and L40S, respectively. This indicates that the decoupled decompression processes incur overheads that exceed the computation time of the baseline GEMM. ZipGEMM stands out as the only implementation that can significantly surpass the efficient Tensor Core GEMM. These results highlight the effectiveness of ZipGEMM's fused decompression-computation approach, which efficiently transforms storage savings into tangible execution speedup. 

We further conducted a layer-wise analysis (Figure~\ref{fig::speedup}(c)). ZipGEMM exhibits significant acceleration on most of the computationally intensive layers within a transformer block. For instance, within the LLaMA3.1 model family on the L40S, ZipGEMM achieves average speedups of $1.39\times$ and $1.64\times$ on the GateUp\_proj and Down\_proj layers, respectively. However, ZipGEMM may experience a slowdown when processing certain layers with small shapes; for example, on the L40S, its performance on the O\_proj layer of LLaMA3.1-8B is reduced to $0.79\times$. This is primarily because small layers require fine-grained parameter tuning (e.g., split-K configurations and precise tiling) to fully utilize hardware, which is beyond the scope of this work. Nevertheless, such layers account for only a small fraction of the total FLOPs within a Transformer block. ZipGEMM delivers robust block-level speedups of $1.35\times$ for LLaMA3.1-8B and $1.48\times$ for LLaMA3.1-405B on the L40S. 

\begin{figure}[tbp]
  \centering
  \includegraphics[width=0.98\linewidth]{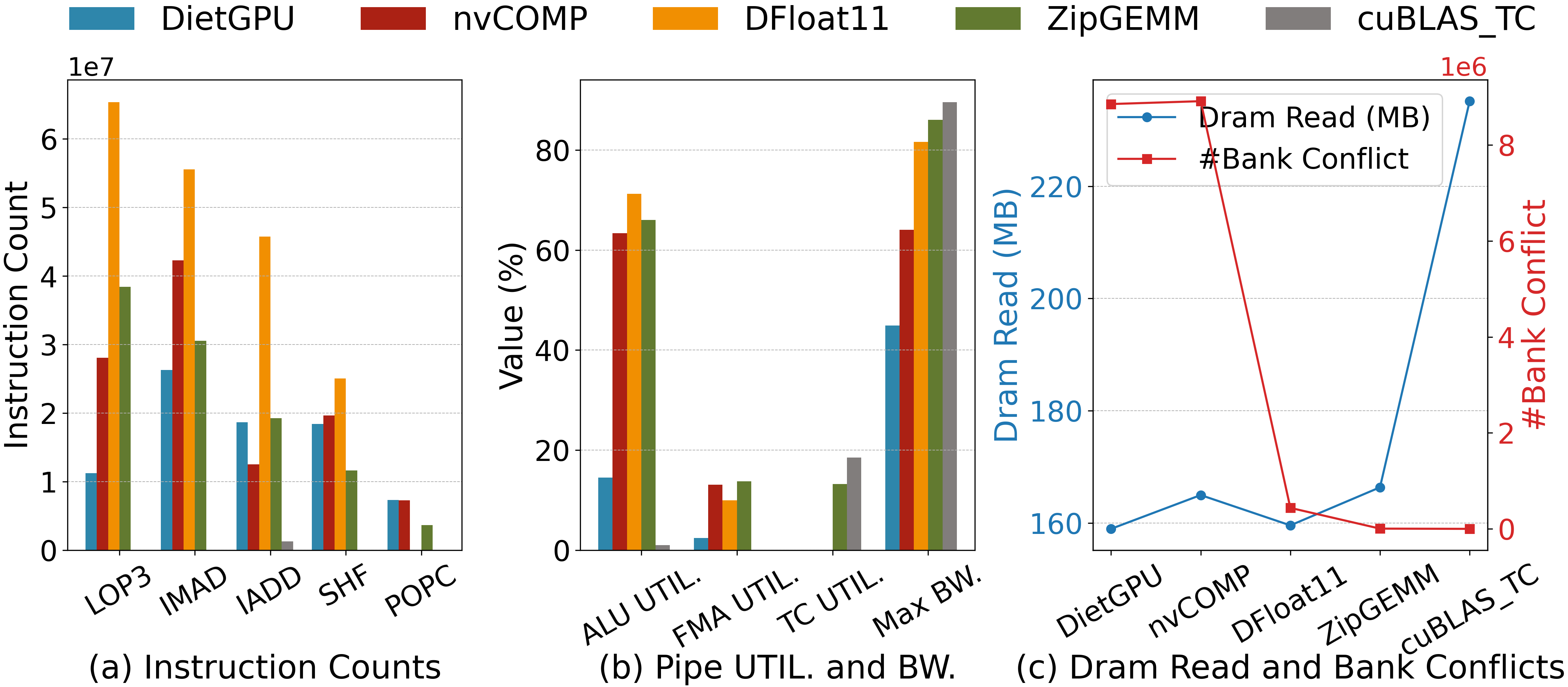}
\caption{Micro-level kernel performance analysis.}
\label{fig::profile}
\end{figure}
\PHM{Micro-level Analysis.} We profiled ZipGEMM with Nsight Compute (NCU) on an RTX4090 to identify the source of its speedup ($M=28672, K=4096$ and $N=32$). As shown in Figure~\ref{fig::profile}, the performance gain stems from a deliberate architectural trade-off: introducing a predictable ALU workload for on-the-fly decoding in exchange for a reduction in memory traffic. Figure~\ref{fig::profile}(a) quantifies this trade-off. The high volume of integer and logical instructions (LOP3, IADD, and POPC) reflects the computational cost of our core decoding steps. This workload is the price for a 29.3\% drop in DRAM reads, a direct validation of the TCA-TBE format's efficiency. Crucially, the two-level software pipeline effectively hides the decoding latency by overlapping it with compute and memory operations. As a result, even with ALU utilization soaring to 66.0\%, Tensor Core utilization is maintained at a remarkable 71.6\% of the cuBLAS baseline, demonstrating that compute throughput is preserved (Figure~\ref{fig::profile}(b)). This high pipeline efficiency is enabled by our data layout. As seen in Figure~\ref{fig::profile}(c), shared memory bank conflicts are virtually eliminated ($\sim$4.7K) compared to the millions incurred by methods like DietGPU. This conflict-free access is a prerequisite for our fine-grained pipeline, ensuring smooth data flow and maximizing SIMT throughput.

\begin{figure}[tbp]
  \centering
  \includegraphics[width=1.0\linewidth]{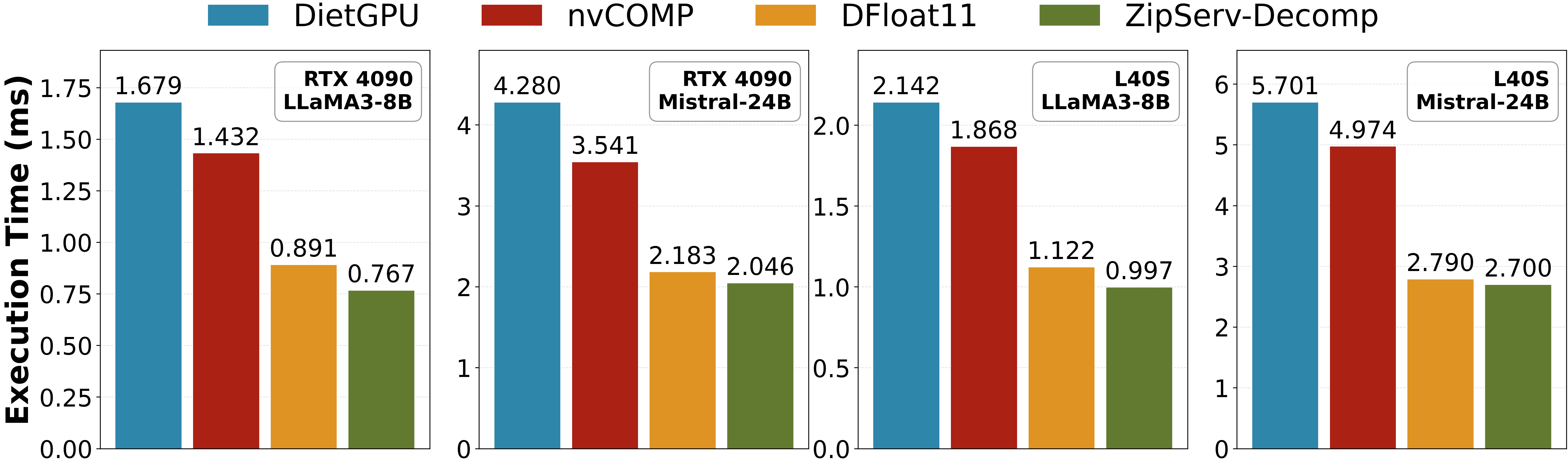}
\caption{Standalone decompression kernel comparison.} 
\label{fig::vsdecomp}
\end{figure}

\subsection{Decompression Kernel Performance}
To further dissect decompression efficiency, we benchmark our standalone \SystemName{}-Decomp kernel. Figure~\ref{fig::vsdecomp} presents the total decompression time for all weights in a full Transformer block of LLaMA3.1-8B and Mistral-24B. \SystemName{}-Decomp achieves average speedups of \textbf{2.14$\times$}, \textbf{1.83$\times$}, and \textbf{1.10$\times$} over DietGPU, nvCOMP, and DFloat11, respectively. Although the TCA-TBE format was co-designed to support fused execution with matrix multiplication, its structure proves highly efficient for standalone decompression as well. This efficiency stems from its fixed-length, warp-aligned design, which eliminates control divergence and enables warp-synchronous per-thread decoding. In contrast, although existing baselines are explicitly optimized for decompression, they often rely on variable-length, entropy-coded formats. These lead to thread divergence, serialized bit parsing, and irregular memory access that degrade GPU efficiency.

\begin{figure}[tbp]
  \centering
  \includegraphics[width=1.0\linewidth]{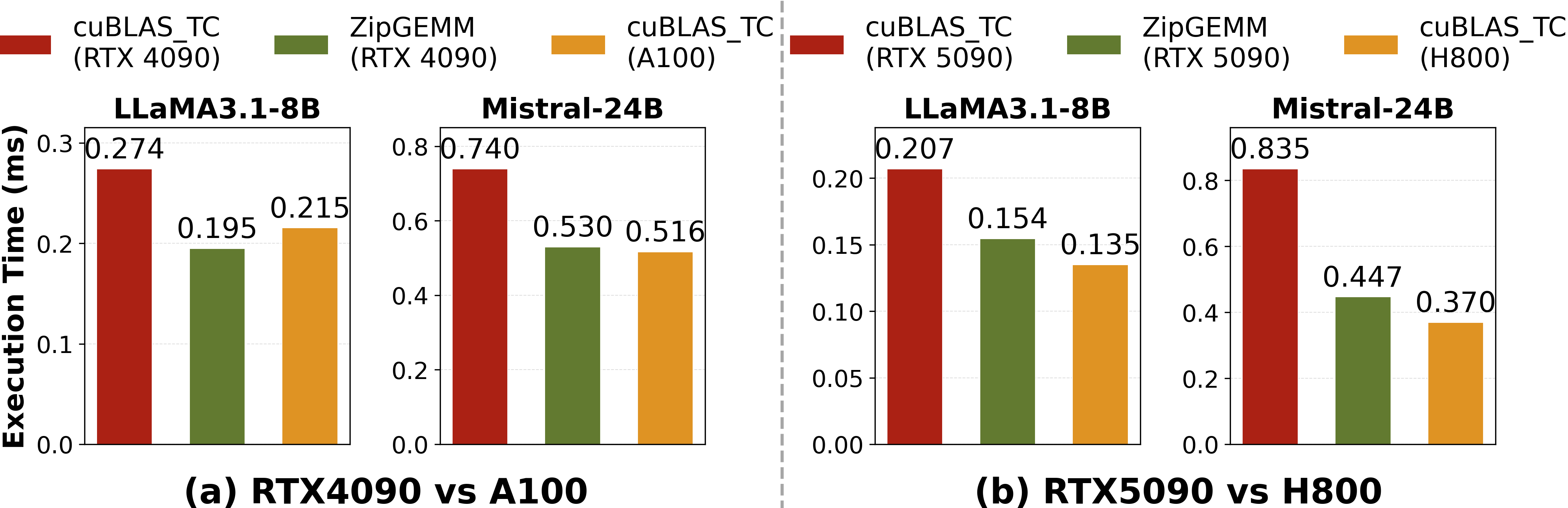}
\caption{Cross-generation performance comparison.}
\label{fig::vshighend}
\end{figure}

\subsection{Performance Across GPU Generations and Tiers}\label{sec::vshighend}
To establish forward compatibility, we benchmark ZipGEMM on the latest NVIDIA RTX5090 and compare it against top-tier datacenter A100 and H800 using LLaMA3.1-8B and Mistral-24B GateUp\_proj layers at batch size 32. We first directly port ZipGEMM to the Blackwell-based RTX5090 without exploiting new features (e.g., Tensor Memory and asynchronous \texttt{WMMA} execution~\cite{jarmusch2025dissecting}). As shown in Figure~\ref{fig::vshighend}, ZipGEMM delivers substantial speedups over cuBLAS\_TC on RTX5090—1.34$\times$ for LLaMA3.1-8B and 1.87$\times$ for Mistral-24B—confirming the design to be forward-compatible. ZipGEMM also narrows the consumer–datacenter divide: on an RTX4090, ZipGEMM outperforms the standard cuBLAS\_TC on A100 with LLaMA3.1-8B (0.195 ms vs. 0.215 ms, 9.3\% faster) and is only 2.7\% slower on Mistral-24B (0.530 ms vs. 0.516 ms), effectively placing it in the same performance class. This trend intensifies on newer hardware. While a standard RTX5090 trails the H800 by 53.3\% (LLaMA3.1-8B) and 125.7\% (Mistral-24B), ZipGEMM reduces these deficits to 14.1\% and 20.8\%, respectively (Figure~\ref{fig::vshighend}(b)), approaching datacenter-level performance on consumer GPUs.

\begin{figure}[tbp]
  \centering
  \includegraphics[width=1.0\linewidth]{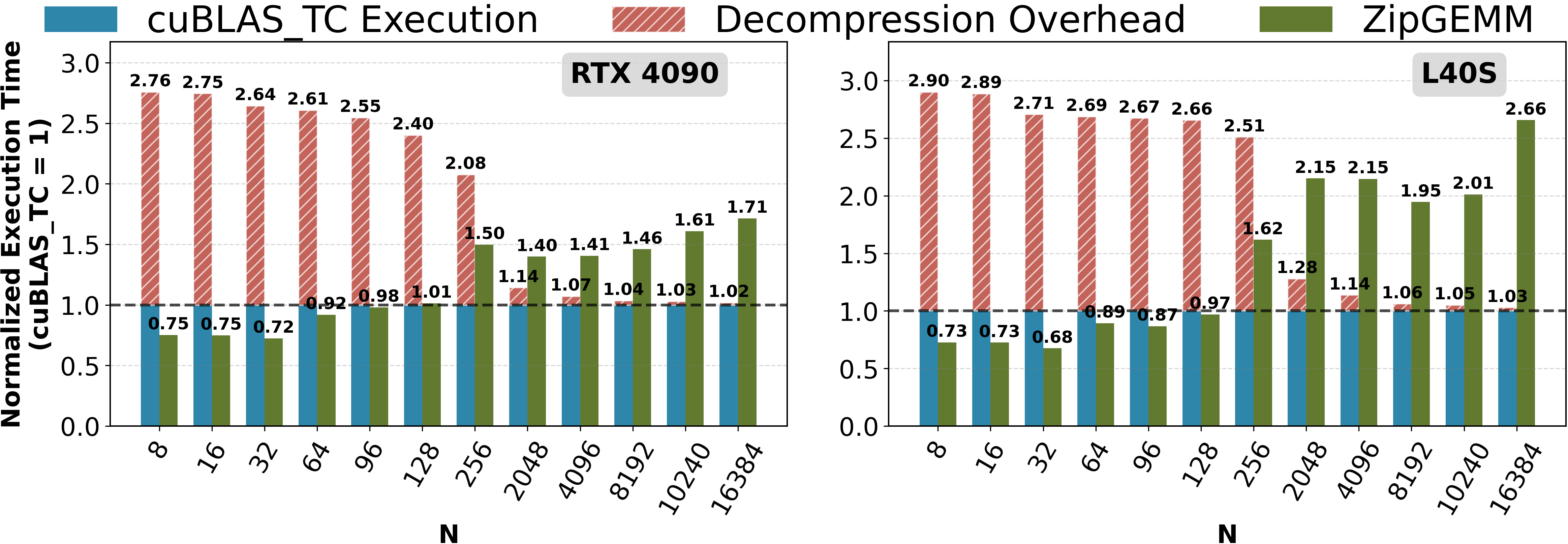}
\caption{\SystemName{} performance under different N settings.}
\label{fig::limitation2}
\end{figure}
\subsection{Overhead Analysis}\label{sec::overhead}
We analyze the system overhead from two perspectives: runtime inference overhead and offline preparation cost. \ding{182} Runtime Overhead. Figure~\ref{fig::limitation2} quantifies the overhead of \SystemName{} during inference across different $N$ settings ($N=BS \times Seqlen$). In the decode stage (small $N$, typically 1--128), the fused ZipGEMM kernel incurs no overhead. Instead, it consistently outperforms the cuBLAS\_TC baseline in these memory-bound regimes, with on-the-fly decompression fully hidden within the kernel execution. For the compute-bound prefill stage (large $N$, e.g., 8192), where ZipGEMM's on-the-fly decompression overhead outweighs its benefits from reduced memory access, \SystemName{} switches to a decoupled pipeline. The efficient decompression kernel first expands the compressed weights, followed by cuBLAS\_TC GEMMs. This incurs a limited overhead of only $\sim$4\%/2\% of the GEMM time at $N=8192/16384$. \ding{183} Offline Compression Cost. Beyond runtime performance, we also evaluate the one-time cost of preparing the model. Compressing the LLaMA-3.1-8B model takes approximately 2.5 minutes on a 16-core Intel Xeon 8352V CPU. Given that this is an offline operation performed only once prior to deployment, it does not impact the critical path of online serving and is negligible when amortized over the model's lifecycle.

\subsection{End-to-end Inference Performance}


\PHM{Setup.} We evaluate the end-to-end inference performance of \textsc{ZipServ} on a range of representative models and hardware configurations: LLaMA3.1-8B on one RTX4090 GPU, Mistral-24B on two L40S GPUs, and LLaMA3.1-70B on four L40S GPUs with tensor parallelism. We benchmark using batch sizes of 8 and 32, with varied output sequence lengths of 128, 256, 512, 1024, and 2048 tokens to simulate different serving scenarios. We compare \textsc{ZipServ} against three leading baseline systems: \ding{182} vLLM~\cite{vllm}, a state-of-the-art LLM inference and serving framework; \ding{183} Transformers~\cite{wolf2020transformers}, a widely adopted standard library; and \ding{184} DFloat11~\cite{dfloat11}, representing state-of-the-art performance for lossless compression-based inference frameworks. We measure two key metrics: end-to-end request latency (total time to generate the full output sequence) and throughput (output tokens per second). As shown in Figure~\ref{fig::e2e}, \textsc{ZipServ} consistently demonstrates superior performance across all tested configurations.

\begin{figure}[tbp]
\centering
\includegraphics[width=1.0\linewidth]{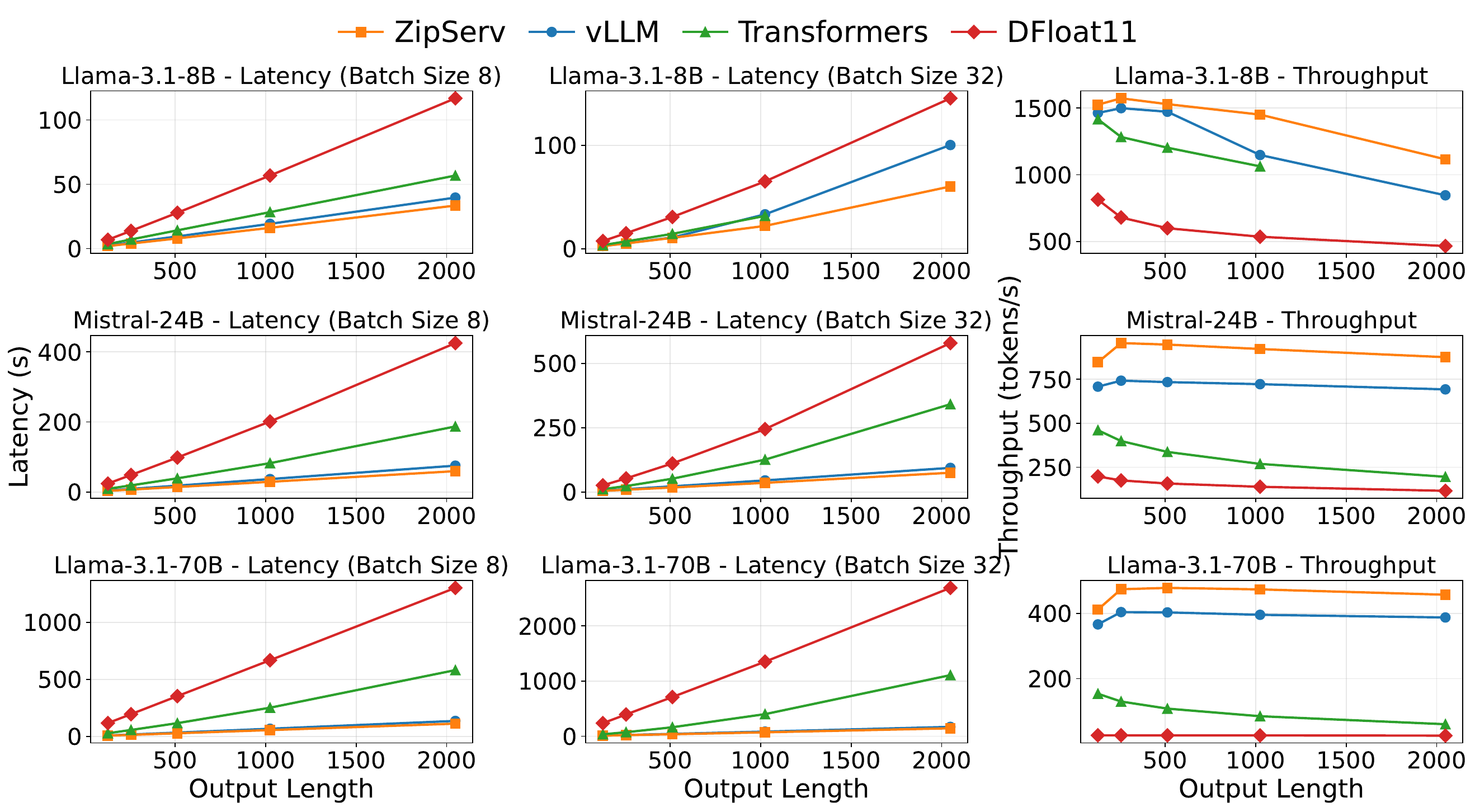}
\caption{End-to-end performance comparison.}
\label{fig::e2e}
\end{figure}

\PHM{Results.} For \textbf{latency}, on average, across all models and batch sizes, \textsc{ZipServ} reduces latency by 17.60\%, 60.79\%, and 82.13\% compared to vLLM, Transformers, and DFloat11, respectively. For \textbf{throughput}, \textsc{ZipServ} provides average speedups of 1.22$\times$ over vLLM, 3.18$\times$ over Transformers, and 8.52$\times$ over DFloat11. The performance gains are pronounced for long-context generation, where the memory-bandwidth savings and computational efficiency of the fused ZipGEMM kernel in the decode phase become dominant. For instance, when generating 2048 output tokens with batch size of 32 using LLaMA3.1-8B, \textsc{ZipServ} achieves a throughput of 1105 tokens/sec, resulting in a 1.66$\times$ speedup over vLLM. We also analyzed the \textbf{memory consumption} during inference. For LLaMA3.1-8B, Mistral-24B, and LLaMA3.1-70B, \textsc{ZipServ} reduces the weight footprint of 14.96/43.92/131.56 GB down to 10.83  (72.4\%)/31.30 (71.3\%)/93.52 (71.1\%) GB, respectively. The reduction in weight storage further enhances serving efficiency in two key ways. First, it enables the deployment of larger models on resource-constrained hardware. Second, the freed memory can be allocated to the KV cache, allowing memory managers like vLLM's PagedAttention~\cite{vllm} to support larger batch sizes and longer contexts, thereby converting static weight savings into dynamic throughput gains.

\PHM{Breakdown Analysis.} We further dissect the performance gains by analyzing the latency and memory composition of LLaMA-3.1-8B on an RTX4090, as detailed in Figure~\ref{fig::limitation}. In the baseline vLLM system (at sequence length 1024), GEMM operations dominate the runtime, consuming 24.99 ms (83.6\% of total latency). \SystemName{} effectively alleviates this bottleneck: the fused ZipGEMM kernel, combined with residual dense GEMMs, reduces the total linear layer latency to 14.76 ms, a 1.69$\times$ improvement. Since Attention (3.02 ms) and other overheads (1.88 ms) remain constant, these kernel-level gains directly drive the end-to-end speedup. On the memory front, \SystemName{} compresses the static weights from 14.96 GB to 11.18 GB. This 3.78 GB saving is automatically repurposed by the memory manager to expand the KV cache capacity from 5.07 GB to 8.60 GB (a 1.70$\times$ increase), thereby enabling the higher throughput and longer context support observed in our end-to-end benchmarks.

\begin{figure}[tbp]
  \centering
  \includegraphics[width=1.0\linewidth]{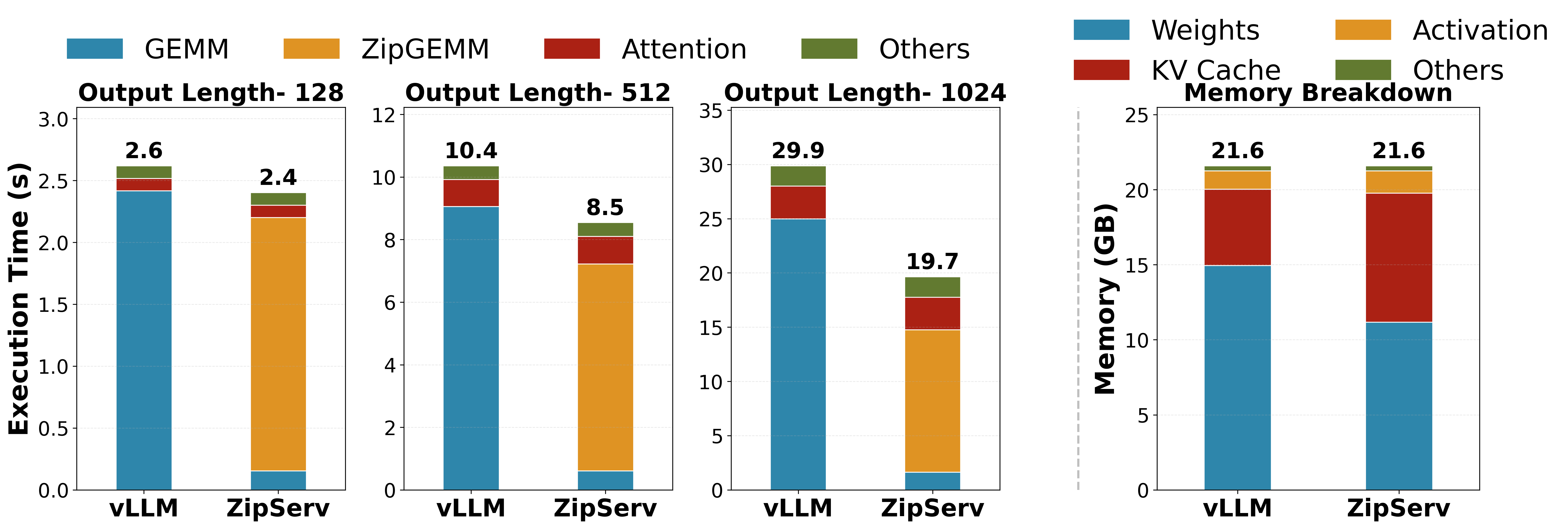}
\caption{Breakdown of end-to-end inference time and memory consumption.}
\label{fig::limitation}
\end{figure}

\begin{figure}[tbp]
  \centering
  \includegraphics[width=1.0\linewidth]{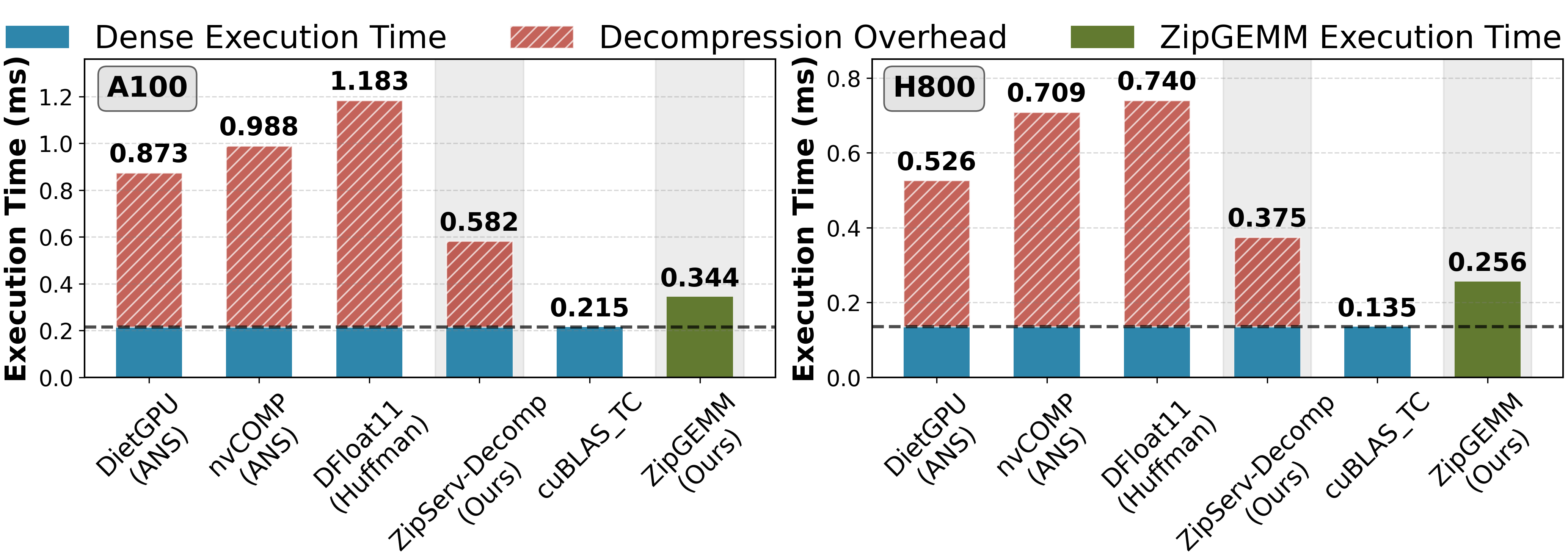}
\caption{Performance on training-oriented GPUs.}
\label{fig::limitation}
\end{figure}

\section{Limitation and Discussion}\label{sec:discussion}
\SystemName{} is designed for the increasingly important deployment scenario on resource-constrained consumer-grade and inference-optimized GPUs, where limited memory bandwidth makes lossless compression a powerful lever for efficiency. On such platforms, \SystemName{} consistently delivers substantial acceleration and memory savings.  To stress-test performance under more bandwidth-relaxed conditions, we also benchmarked on training-oriented datacenter GPUs (A100, H800), where ZipGEMM may not always match the highly optimized cuBLAS baseline (Figure~\ref{fig::limitation}). This reflects a hardware--software mismatch rather than an algorithmic limitation: abundant HBM (HBM2e/HBM3) alleviates the memory bottlenecks \SystemName{} is designed to mitigate, while lower core frequencies (e.g., 1410 MHz on A100 vs. 2520 MHz on RTX4090) make the intensive ALU workload harder to hide within the software pipeline.   Nevertheless, \SystemName{} still provides best-in-class support for compressed inference. Our standalone decompression kernel outperforms state-of-the-art by up to 2.64$\times$, and ZipGEMM remains the fastest fused GEMM kernel. As shown in \S\ref{sec::vshighend}, \SystemName{} also enables consumer-grade GPUs to close much of the gap with elite datacenter accelerators, offering a compelling cost-performance proposition for deployment on accessible hardware. 

While \SystemName{} targets bit-exact inference, a comparison with lossy techniques is instructive. ZipGEMM was benchmarked against the Marlin W8A16 FP8 kernel on an RTX4090 GPU, using a representative weight shape ($28672 \times 4096$) at batch size 32. Although ZipGEMM trails Marlin-W8A16 in latency (0.194 ms vs. 0.143 ms), the resulting $1.36\times$ gap aligns closely with the ratio of effective bit-widths ($\sim$11 bits vs. FP8). This indicates that our design reduces and hides the overhead of complex lossless decompression within the memory access latency. Furthermore, \SystemName{} is orthogonal to lossy methods and can be applied atop quantized weights to exploit residual redundancy, combining aggressive compression with enhanced performance~\cite{gerogiannis2025deca}.

Three key directions are envisioned for extending \SystemName{}. First, the TCA-TBE format can be adapted for lossless KV Cache compression, addressing the dominant memory bottleneck in long-context serving~\cite{liu2024cachegen}. Second, although currently optimized for NVIDIA architectures, ZipGEMM can be adapted to other matrix accelerators, including Intel AMX~\cite{kim2024exploiting} and AMD Matrix Cores~\cite{schieffer2024rise}. This extensibility is supported by the hardware-agnostic nature of the core design, as the integer arithmetic and population count instructions required for decompression are widely supported across modern instruction sets. Finally, \SystemName{} is applicable to broader system-level challenges, including efficient model checkpointing ~\cite{waddington2025lossless,strati2025pccheck} and communication compression in distributed training~\cite{wang2025zen,zhang2023evaluation}.

\section{Related Work}
\PHM{Lossy Model Compression.} Lossy methods dominate LLM acceleration, mainly via post-training quantization (PTQ)~\cite{frantar2022gptq, dettmers2022gpt3, xiao2023smoothquant, lin2024awq, ashkboos2024quarot, zhao2024atom, liu2024spinquant, ptq_oac, dong2024stbllm} and pruning~\cite{Frantar2023SparseGPTML,Sun2023ASA_wanda,dong2024pruner, das2023beyond,xu2024besa,zhang2024plugandplay}, supported by efficient kernels~\cite{QIGen, nuQmm, frantar2024marlin, lut_gemm, wang2024ladder, SpInfer, xia2023flash}. These approaches risk accuracy degradation~\cite{Quantization_Degradation,dong2025can}. \SystemName{} provides bit-exact, lossless, and efficient inference.

\PHM{Lossless Model Compression.} A large body of work has investigated memory compression to reduce bandwidth or expand capacity via lightweight hardware schemes~\cite{kim2016bit, choukse2020buddy, ekman2005robust, zhao2015buri, pekhimenko2013linearly, choukse2018compresso,pekhimenko2012base}, but these techniques are not tailored for model compression. Efforts such as LMC~\cite{waddington2025lossless} and ZipNN~\cite{zipnn} apply Huffman~\cite{huffman2007method} to compress checkpoints for efficient storage and distribution, but offer no runtime benefits. Recent systems, including NeuZip~\cite{neuzip}, DietGPU~\cite{dietgpu}, nvCOMP~\cite{NVIDIA_nvcomp_2025}, and DFloat11~\cite{dfloat11}, support lossless GPU codecs at runtime to reduce inference memory usage, but incur significant overhead (\S\ref{sec:gaps}). Huff-LLM~\cite{yubeaton2025huffllm} achieves higher efficiency but targets FPGA-like architectures and does not generalize to GPUs. Ecco~\cite{cheng2025ecco} designs specialized Huffman codec hardware, but targets lossy compression.  
Our \SystemName{} fuses decompression and GEMM computation, turning lossless compression into practical GPU inference acceleration.

\PHM{Kernel Fusion.} Kernel fusion reduces memory traffic by combining operators, as in FlashAttention~\cite{dao2022flashattentionfastmemoryefficientexact, FlashAttention_3, dao2023flashattention2fasterattentionbetter} or graph-level frameworks~\cite{ma2020rammer, xing2022bolt, wu2025mirage}. \SystemName{} draws insights from them and, to our knowledge, is the first to fuse decompression with GEMM, avoiding full-weight materialization.

\PHM{System-Level Optimizations for LLM Inference.} Modern LLM serving is powered by sophisticated inference engines~\cite{mukherjee2023orcaprogressivelearningcomplex, vllm, song2024powerinfer, gong2025past, zheng2024sglang, zhong2024pd, amey2024sarathi, holmes2024deepspeedfastgen, ServerlessLLM, AlpaServe, Llumnix}, which focus on high-level scheduling strategies and memory orchestration. \SystemName{} is orthogonal and complementary, and can be integrated as a high-performance backend. This allows these engines to benefit from both a reduced memory footprint and accelerated computation without altering their core logic.

\section{Conclusion}
We presented \textbf{\SystemName{}}, a lossless compression framework that, for the first time, delivers significant inference acceleration for Large Language Models. By co-designing a hardware-aware compression format, TCA-TBE, with a fused decompression-GEMM kernel, \SystemName{} overcomes the architectural bottlenecks that have historically plagued lossless methods on GPUs. Our evaluation demonstrates substantial speedups over highly-optimized baselines like cuBLAS, particularly on consumer-grade hardware where \SystemName{} narrows the performance gap to expensive datacenter GPUs, establishing a compelling cost-performance proposition. Ultimately, \SystemName{} reframes lossless compression from a mere storage-saving utility into a practical and powerful tool for high-performance, bit-exact LLM inference.

\begin{acks}
We extend our thanks to the anonymous ASPLOS reviewers and our shepherd, Bo Wu, for their valuable feedback and support. This work was partially supported by National Natural Science Foundation of China under Grant No. 62272122, the Guangzhou Municipal Joint Funding Project with Universities and Enterprises under Grant No. 2024A03J0616, Guangzhou Municipality Big Data Intelligence Key Lab (2023\allowbreak A03J0012), Hong Kong CRF grants under Grant No. C7004-22G and C6015-23G, the NSFC/RGC Collaborative Research Scheme under the contract of CRS\_HKUST601/24, and National Natural Science Foundation of China under Grant No. 62302126. Wei Wang and Xiaowen Chu are the corresponding authors.
\end{acks}
\appendix
\section{Theoretical Analysis: Compressibility of LLM BF16 Weights}\label{sec:theory_appendix}
We present the theoretical foundation showing why exponent distributions in LLM weights are highly skewed and exhibit top‑K contiguity.

Following recent studies~\cite{qlora, gauss1, gauss2}, we assume that weights  $w \in \mathbb{R}^D$ in a single layer (vectorized for analysis) follow a zero-mean normal distribution:

$$w \sim \mathcal{N}(0, \sigma^2 I)$$

A non-zero, normal BF16 number $v$ is represented as $v = (-1)^S \times 2^{E - 127} \times (1.m_1...m_7)_2$, where $S$ is the sign bit, $E$ is the 8-bit unsigned integer value of the exponent field, and $(1.m_1...m_7)_2$ is the 7-bit mantissa with an implicit leading 1. The bias for the BF16 exponent is 127.

Let $x = E - 127$ be the actual exponent value. Any number using this specific exponent $E$ will have a magnitude in the range $[2^x, 2^{x+1})$. Our analysis focuses on the probability distribution of this exponent value $x$ (or equivalently, $E$), given that the weights $w$ are drawn from $\mathcal{N}(0, \sigma^2)$. The redundancy arises if this distribution $P(X=x)$ is highly skewed, meaning some exponent values are far more common than others.

The probability of a single weight $w_i$ falling into the magnitude range corresponding to a specific exponent $x$ is:

$$P(X=x) = P(2^x \leq |w_i| < 2^{x+1})$$

Note that this calculation is an approximation. We are calculating the probability of a value falling into the exponent's ideal magnitude range $[2^x, 2^{x+1})$, which simplifies the BF16 quantization process by ignoring rounding effects caused by the 7-bit mantissa. However, this serves as a robust approximation for analyzing the overall exponent distribution.

Given that $w_i \sim \mathcal{N}(0, \sigma^2)$, its Probability Density Function (PDF) is $f(w_i) = \frac{1}{\sqrt{2\pi\sigma^2}} e^{-w_i^2 / (2\sigma^2)}$. The probability is the integral of this PDF over the positive and negative ranges:

$$P_{\sigma}(X=x) = 2 \times \int_{2^x}^{2^{x+1}} \frac{1}{\sqrt{2\pi\sigma^2}} e^{-t^2 / (2\sigma^2)} dt$$

This integral can be expressed using the error function (erf), defined as $\text{erf}(z) = \frac{2}{\sqrt{\pi}} \int_0^z e^{-u^2} du$:

$$P_{\sigma}(X=x) = \text{erf}\left(\frac{2^{x+1}}{\sigma\sqrt{2}}\right) - \text{erf}\left(\frac{2^x}{\sigma\sqrt{2}}\right)$$

\begin{theorem}[]
The function $P(X=x) = \text{erf}\left( \frac{2^{x+1}}{\sigma\sqrt{2}} \right) - \text{erf}\left( \frac{2^x}{\sigma\sqrt{2}} \right)$ is unimodal for $x \in \mathbb{Z}$.
\end{theorem}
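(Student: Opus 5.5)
The plan is to prove a stronger statement: after a logarithmic change of variables, $P$ is log-concave. Unimodality on the integers then follows by sampling that log-concave function on an arithmetic progression.

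Set $a = 2^x/(\sigma\sqrt{2})$, so that $P(X=x) = g(a)$ with
\[
g(a) = \mathrm{erf}(2a) - \mathrm{erf}(a) = \frac{2}{\sqrt{\pi}}\int_a^{2a} e^{-u^2}\,du , \qquad a>0 .
\]
Write $a = e^{t}$. The integer $x$ then corresponds to $t = x\ln 2 - \ln(\sigma\sqrt{2})$, an arithmetic progression in $t$. Substitute $u = e^{t+r}$ with $r \in [0,\ln 2]$. This gives
\[
g(e^t) = \frac{2}{\sqrt{\pi}} \int_{\mathbb{R}} \mathbf{1}_{[0,\ln 2]}(r)\, \exp\!\bigl(t + r - e^{2(t+r)}\bigr)\, dr .
\]

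The key step is to show that $H(t) = \ln g(e^t)$ is concave in $t\in\mathbb{R}$. The integrand $F(t,r) = \mathbf{1}_{[0,\ln 2]}(r)\exp\bigl(t+r-e^{2(t+r)}\bigr)$ is jointly log-concave in $(t,r)$, for three reasons:
\begin{itemize}
\item $t+r$ is linear in $(t,r)$;
\item $e^{2(t+r)}$ is a convex function composed with a linear map, so it is convex;
\item the indicator of the convex set $\mathbb{R}\times[0,\ln 2]$ is log-concave.
\end{itemize}
By the Pr\'ekopa--Leindler theorem, the marginal $\int F(t,r)\,dr$ is therefore log-concave in $t$, so $H$ is concave. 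One could instead differentiate $H$ twice and check the sign by hand. I expect that to be the main obstacle, because it needs an inequality between moments of a truncated Gaussian-type density. Invoking Pr\'ekopa avoids that calculation entirely.

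Next, restrict to the integers. Let $h_x = H(x\ln 2 - \ln(\sigma\sqrt2))$. Since $H$ is concave and the sample points are equally spaced, $h_{x+1}-h_x$ is nonincreasing in $x$. So the increments change sign at most once: $(h_x)$ first increases, then decreases, possibly with one flat step. Exponentiating preserves this ordering, so $P(X=x)$ is unimodal on $\mathbb{Z}$.

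Finally, the mode is genuine and interior. As $t\to-\infty$ we have $H(t) = t + O(1)\to -\infty$. As $t\to+\infty$ the integrand decays like $\exp(-e^{2t})$, so again $H(t)\to-\infty$. Hence the maximum is attained at a finite integer $x^\star$, located near $\log_2\sigma$. Concavity of $H$ also shows that the decay on both sides of $x^\star$ is at least geometric, and on the right it is doubly exponential. This is the quantitative form of the skewness and top-$k$ contiguity claimed in \S\ref{sec::3.1}.
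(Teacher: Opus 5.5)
Your proof is correct. It uses a different mechanism from the paper's and proves a stronger statement.

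The paper differentiates the continuous extension $f(x)=\mathrm{erf}(2u)-\mathrm{erf}(u)$, with $u=2^x/(\sigma\sqrt{2})$, just once:
$f'(x)=\frac{2}{\sqrt{\pi}}\,u\ln 2\,e^{-u^2}\bigl(2e^{-3u^2}-1\bigr)$.
Every factor except the last is positive, and $u$ increases with $x$. So $f'$ changes sign exactly once, at $u_0=\sqrt{\ln 2/3}$ (that is, $2^x\approx 0.68\,\sigma$). Sampling an increasing-then-decreasing function at the integers then gives unimodality.

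You instead write $g(e^t)$ as the marginal of a jointly log-concave function. Equivalently, it is the mass that the log-concave density $\frac{2}{\sqrt{\pi}}e^{s-e^{2s}}$ of $\ln\bigl(|w|/(\sigma\sqrt{2})\bigr)$ assigns to the window $[t,t+\ln 2]$. Pr\'ekopa--Leindler then makes $H$ concave. Restricted to the integers, this is discrete log-concavity, $P(x)^2\ge P(x-1)P(x+1)$.

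What each route buys:
\begin{itemize}
\item The paper's argument is fully elementary and locates the mode explicitly. The second-derivative computation you flagged as the main obstacle only arises because you are aiming for log-concavity. For unimodality alone, one derivative of $f$ itself (not of $\ln f$) factors cleanly and suffices.
\item Your argument gives the stronger conclusion. That is exactly what justifies your closing remark about at-least-geometric decay on both sides of the mode; the paper's argument gives no information about tail rates. It also isolates a structural reason that carries over verbatim to any weight model whose log-magnitude density is log-concave.
\end{itemize}

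One small overstatement: concavity of $H$ gives nonincreasing increments $h_{x+1}-h_x$. That allows several consecutive zero increments, not just ``one flat step''. Ruling this out needs strict concavity, or for instance real-analyticity of $H$ together with $H\to-\infty$ at both ends. It does not affect unimodality.
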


\begin{proof}
To prove unimodality, we consider the continuous extension $f(x) = \text{erf}\left( \frac{2^{x+1}}{\sigma\sqrt{2}} \right) - \text{erf}\left( \frac{2^x}{\sigma\sqrt{2}} \right)$ for $x \in \mathbb{R}$. If $f(x)$ is unimodal, then the discrete function $P(X=x)$, which is the evaluation of $f(x)$ at integer points, will also be unimodal. 

Let $u = \frac{2^x}{\sigma\sqrt{2}}$, so that $f(x) = \text{erf}(2u) - \text{erf}(u)$. The derivative of the error function is $\frac{d}{dz} \text{erf}(z) = \frac{2}{\sqrt{\pi}} e^{-z^2}$. Thus, the derivative of $f$ with respect to $x$ is:

$$\frac{df}{dx} = \frac{2}{\sqrt{\pi}} u \ln 2 e^{-u^2} \left( 2 e^{-3u^2} - 1 \right)$$

Let $h(u) = 2 e^{-3u^2} - 1$. Since $\frac{2}{\sqrt{\pi}}$, $u$, $\ln 2$, and $e^{-u^2}$ are all positive for $u > 0$ (as $2^x > 0$), the sign of $\frac{df}{dx}$ is determined solely by $h(u)$.

Setting $h(u) = 0$ gives:

$$2 e^{-3u^2} = 1 \implies e^{-3u^2} = \frac{1}{2} \implies -3u^2 = -\ln 2 \implies u^2 = \frac{\ln 2}{3}$$

Thus, the unique critical point is at $u_0 = \sqrt{\frac{\ln 2}{3}}$.

For $u < u_0$, we have $3u^2 < \ln 2$, so $e^{-3u^2} > \frac{1}{2}$, meaning $h(u) > 0$ and $\frac{df}{dx} > 0$, so $f(x)$ is increasing.

For $u > u_0$, we have $3u^2 > \ln 2$, so $e^{-3u^2} < \frac{1}{2}$, meaning $h(u) < 0$ and $\frac{df}{dx} < 0$, so $f(x)$ is decreasing.

Therefore, $f(x)$ has a single maximum at $u_0$, proving that it is unimodal. Since $P(X=x)$ is the discrete sampling of $f(x)$ at integer values, it follows that $P(X=x)$ is also unimodal.\end{proof}

\begin{theorem}[]Contiguity of Top-K in Unimodal Distributions.\end{theorem}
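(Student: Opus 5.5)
We first fix the precise statement being proved. Let $P:\mathbb{Z}\to[0,1]$ be unimodal, meaning there is a mode $m$ with $P$ non-decreasing on $(-\infty,m]$ and non-increasing on $[m,\infty)$. The previous theorem supplies exactly this for the exponent distribution $P_\sigma(X=x)$. The claim is then: for every $K\ge 1$, there is a set of $K$ most probable values that is a contiguous integer interval $\{a,a+1,\dots,a+K-1\}$ containing $m$. If all the values of $P$ are distinct, the top-$K$ set is unique, and it is this interval.

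The approach is a greedy/exchange argument by induction on $K$. For $K=1$ the top-1 set is $\{m\}$. Suppose the top-$K$ set is an interval $I_K=[a,b]$ with $a\le m\le b$. Every integer outside $I_K$ lies either left of $a$ or right of $b$. On the left side, unimodality gives $P(a-1)\ge P(a-2)\ge\cdots$, because these points lie left of $m$ where $P$ is non-decreasing. Symmetrically, on the right side $P(b+1)\ge P(b+2)\ge\cdots$. So the largest value of $P$ outside $I_K$ is attained at $a-1$ or at $b+1$. Adding that point gives a top-$(K+1)$ set, and the enlarged set $I_{K+1}$ is still an interval containing $m$.

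To justify that this greedy set really is a top-$(K+1)$ set, we use the characterization that $S$ with $|S|=K$ is a top-$K$ set iff $\min_{S}P\ge \sup_{S^c}P$. The induction preserves this invariant. Before the step, $\min_{I_K}P\ge\max(P(a-1),P(b+1))$. After adding the larger of these two, the remaining boundary candidate and everything further out are bounded by the added value. They are also bounded by $\min_{I_K}P$.

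The main obstacle is ties. When $P(a-1)=P(b+1)$, or when there are plateaus, the top-$K$ set is not unique. Some top-$K$ sets may then be non-contiguous; for example, the whole outer plateau is tied, and one could pick a point two steps out. We would handle this by stating the result as the existence of a contiguous maximizer, or as uniqueness under strict unimodality. For the Gaussian exponent law the continuous extension $f$ is strictly monotone on each side of $u_0$, by the sign analysis of $h(u)$ in the previous proof, so $P$ is strictly increasing then strictly decreasing on $\mathbb{Z}$. Exact equality $P(a-1)=P(b+1)$ is then the only remaining degeneracy, and it does not affect contiguity, since either choice keeps an interval.

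Finally, we conclude that for the layer-wise Gaussian model the top-7 exponents form a window $\{e^\star,\dots,e^\star+6\}$. This justifies the base-plus-offset encoding with $\texttt{base\_exp}=e^\star-1$.
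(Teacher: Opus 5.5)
Your proof is correct, but it takes a different route from the paper's. The paper argues by contradiction in one step. If the top-$K$ set had a gap $x_a<x_c<x_b$ with $x_a,x_b\in\mathcal{X}_K$ and $x_c\notin\mathcal{X}_K$, unimodality gives $P(x_c)\ge\min(P(x_a),P(x_b))$. That is at least the $K$-th largest probability, so $x_c$ ``should'' belong to $\mathcal{X}_K$. You instead build a top-$K$ set greedily, growing an interval outward from the mode while keeping the invariant $\min_S P\ge\sup_{S^c}P$.

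\textbf{What your route buys.} It is robust to ties. The paper's last step is a contradiction only when the inequality is strict, so its proof tacitly assumes distinct values. As you note, with plateaus a non-contiguous top-$K$ set really can exist. You prove the correct general statement: a contiguous top-$K$ set exists for any weakly unimodal $P$. You also get for free that these windows contain the mode and are nested in $K$, and that the set is unique when values are distinct.

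\textbf{What the paper's route buys.} It is short and gives a universal conclusion. Under strictness it shows that \emph{every} top-$K$ set is an interval, with no need to track tie-breaking in a greedy process.

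That argument is also the cleanest way to tighten your Gaussian paragraph. There you assert, rather than show, that $P(a-1)=P(b+1)$ is the only relevant degeneracy. Your base case also misses a possible tie between the two integers flanking the continuous maximizer, which gives two top-$1$ sets. To fix this, let $x_0$ be the real $x$ at which $u=u_0$ and set $L=\lfloor x_0\rfloor$. The sign analysis of $h$ shows $P$ is strictly increasing on $(-\infty,L]$ and strictly decreasing on $[L+1,\infty)$. Then any gap $x_a<x_c<x_b$ forces $P(x_c)>\min(P(x_a),P(x_b))$ strictly, which contradicts $\min_S P\ge\sup_{S^c}P$. Hence every top-$K$ set of the Gaussian exponent law is an interval.
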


\begin{proof}
Proof by contradiction: Suppose that the set $\mathcal{X}_K$ of the Top-K most probable values is not contiguous. Then, there exist three integers $x_a < x_c < x_b$ such that: $x_a, x_b \in \mathcal{X}_K$ but $x_c \notin \mathcal{X}_K$.

By the unimodal property, the probability function $P(x)$ first increases and then decreases, so for any $x_c$ between $x_a$ and $x_b$, we have:

$$P(x_c) \geq \min(P(x_a), P(x_b)).$$

Since $x_a$ and $x_b$ are in $\mathcal{X}_K$, they are among the $K$ largest probabilities. Thus, $\min(P(x_a), P(x_b))$ is at least as large as the $K$-th largest probability. Therefore, $P(x_c)$ must also be at least as large as the $K$-th largest probability, meaning $x_c$ should be in $\mathcal{X}_K$.

This contradicts the assumption that $x_c \notin \mathcal{X}_K$. Hence, the Top-K set must be contiguous.
\end{proof}
\bibliographystyle{ACM-Reference-Format}
\balance
\bibliography{references}

\end{document}